\documentclass{article}%
\usepackage{amsmath}
\usepackage{amsfonts}
\usepackage{amssymb}
\usepackage{graphicx}%
\setcounter{MaxMatrixCols}{30}
\providecommand{\U}[1]{\protect\rule{.1in}{.1in}}
\newtheorem{theorem}{Theorem}

\newtheorem{corollary}[theorem]{Corollary}

\newtheorem{definition}[theorem]{Definition}

\newtheorem{proposition}[theorem]{Proposition}
\newtheorem{remark}[theorem]{Remark}

\newenvironment{proof}[1][Proof]{\noindent\textbf{#1.} }{\ \rule{0.5em}{0.5em}}
\begin{document}

\title{The Relativistic Hamilton-Jacobi Equation for a Massive, Charged and Spinning
Particle, its Equivalent Dirac Equation and the\\de Broglie-Bohm Theory}
\author{Waldyr A. Rodrigues Jr. and Samuel A. Wainer\\Institute of Mathematics, Statistics and Scientific Computation\\IMECC-UNICAMP\\walrod@ime.unicamp.br~~~~~~samuelwainer@ime.unicamp.br}
\maketitle

\begin{abstract}
Using Clifford and Spin-Clifford formalisms we prove that the classical
relativistic Hamilton Jacobi equation for a charged massive (and spinning)
particle interacting with an external electromagnetic field is equivalent to
Dirac-Hestenes equation satisfied by a class of spinor fields that we call
classical spinor fields. These spinor fields are characterized by having the
Takabayashi angle function constant (equal to $0$ or $\pi$). We also
investigate a nonlinear Dirac-Hestenes like equation that comes from a class
of generalized classical spinor fields. Finally, we show that a general
Dirac-Hestenes equation (which is a representative in the Clifford bundle of
the usual Dirac equation) gives a generalized Hamilton-Jacobi equation where
the quantum potential satisfies a severe constraint and the \textquotedblleft
mass of the particle\textquotedblright\ becomes a variable. Our results can
then eventually explain experimental discrepancies found between prediction
for the de Broglie-Bohm theory and recent experiments. We briefly discuss de
Broglie's double solution theory in view of our results showing that it can be
realized, at least in the case of spinning free particles.The paper contains
several Appendices where notation and proofs of some results of the text are presented.

\end{abstract}

\section{Introduction}

In this paper we prove that the relativistic Hamilton-Jacobi equation for a
massive particle charged particle\ moving in Minkowski spacetime and
interacting with an electromagnetic field is equivalent to Dirac equation
satisfied by a special class of Dirac spinor fields\footnote{This special
class of spinor fields will be called classical spinor fields.} (characterized
by having the Takabayashi angle function equal to $0$ or $\pi$). Also, any
Dirac equation satisfied by these classical spinor fields implies in a
corresponding relativistic Hamilton-Jacobi equation.

After proving these result we show that since general spinor fields which are
solutions of the Dirac equation in an external potential have in general
Takabayashi angle functions \cite{boudet,daviau} which are not constant like
$0$ or $\pi$). Thus, the corresponding derived generalized Hamilton-Jacobi
equation (GHJE) besides having a quantum potential which must satisfy a very
severe constraint (see Eq. (\ref{28})) has also a variable mass (which is a
function of the Takabayashi angle function). So, the usual equations derived
from Schr\"{o}dinger equation used in simulations of, e.g., the double slit
experiments, do not take into account that the mass of the particle, which
comes from the generalized HJE becomes a variable. This eventually must
explain discrepancies found between theoretical predictions from the\ de
Broglie and Bohm formalism \cite{debroglie,bh,holland} and some results of
experiments\ and inconsistencies (such as necessity of surreal trajectories)
as related, e.g., in
\cite{ck1,ck2,englertetal,jonesetal,mahleretal,savantetal}. We briefly discuss
also the de Broglie's double solution theory in view of our results, finding
that, at least for the case of a free (spinning particle) it can be realized.

To show the above results we will use Clifford bundle formalism where Dirac
spinor fields are represented by an equivalence class of even sections of the
Clifford bundle $\mathcal{C\ell}(M,\mathtt{\eta})$ of differential forms.
Details about this theory and notation used may be found in \cite{r2004,
mr2004,rc2016} and a resume is given in the Appendix. Here we recall that in
this paper all calculations are done in the Minkowski spacetime $(M\simeq
\mathbb{R}^{4},\boldsymbol{\eta},D,\tau_{\mathbf{\eta}},\uparrow)$.

\section{A Trivial Derivation of the Relativistic Hamilton-Jacobi Equation
(HJE)}

Let $\sigma:\mathbb{R\rightarrow}M$, $s\mapsto\sigma(s)$ be a timelike curve
in spacetime time representing the motion of a particle of mass $m$ and
electrical charge $e$ interacting with an electromagnetic field $F=dA$, where
the potential $A\in\sec%
{\textstyle\bigwedge\nolimits^{1}}
T^{\ast}M\hookrightarrow\sec\mathcal{C\ell}(M,\mathtt{\eta})$ and $F\in\sec%
{\textstyle\bigwedge\nolimits^{2}}
T^{\ast}M\hookrightarrow\sec\mathcal{C\ell}(M,\mathtt{\eta})$. Let
$\sigma_{\ast}$ be the velocity of the particle and define\footnote{Notice
that $\mathbf{\eta}$ is a mapping from $TM\times TM$ to $\mathfrak{F}$, the
set of real functions. So, it defines a mapping $\sec TM\ni\mathbf{X\mapsto
}\ \boldsymbol{\eta}(\mathbf{X},~)=X\in\sec T^{\ast}M$.}%
\begin{equation}
v=\boldsymbol{\eta}(\sigma_{\ast},~)\label{1}%
\end{equation}
\ \noindent as a $1$-form field over $\sigma$. Then, the motion of such
particle, as is well known is governed in classical electrodynamics by the
Lorentz force law, i.e.,%
\begin{equation}
m\dot{v}=ev\lrcorner F.\label{2}%
\end{equation}
where $\dot{v}=dv/ds$ Now, let $V\in\sec%
{\textstyle\bigwedge\nolimits^{1}}
T^{\ast}M\hookrightarrow\sec\mathcal{C\ell}(M,\mathtt{\eta})$ be a vector
field such that
\begin{equation}
\left.  V\right\vert _{\sigma}=v,~~~~V^{2}=1.\label{3}%
\end{equation}

As defined in the Appendix, let $\{x^{\mu}\}$ be global coordinates for $M$ in
Einstein-Lorentz-Poincar\'{e} gauge, let $\{\gamma^{\mu}=dx^{\mu}\}$ be a
basis for $%
{\textstyle\bigwedge\nolimits^{1}}
T^{\ast}M$ and moreover, let us consider that\footnote{Thus the $1$-forms
$\gamma^{\mu}$ satisfy $\gamma^{\mu}\gamma^{\nu}+\gamma^{\nu}\gamma^{\mu
}=2\eta^{\mu\nu}.$} $\gamma^{\mu}\in\sec%
{\textstyle\bigwedge\nolimits^{1}}
T^{\ast}M\hookrightarrow\sec\mathcal{C\ell}(M,\mathtt{\eta})$. In these
coordinates, the Dirac operator $\boldsymbol{\partial}$ and the representative
of the spin-Clifford operator $\boldsymbol{\partial}^{(s)}$ acting on a
representative $\psi\in\sec\mathcal{C\ell}(M,\mathtt{\eta})$ of Dirac-Hestenes
spinor field $\Psi\in P_{\mathrm{Spin}_{1,3}^{0}}(M,\mathtt{\eta})\times
_{\rho}\mathbb{C}^{4}$ in a given spin frame $\Xi\in\sec P_{\mathrm{Spin}%
_{1,3}^{0}}(M,\mathtt{g})$ are both represented by $\gamma^{\mu}\partial_{\mu
}$, i.e.,
\begin{equation}
\boldsymbol{\partial}=\gamma^{\mu}\partial_{\mu}\text{ }%
,~~~\boldsymbol{\partial}^{(s)}=\gamma^{\mu}\partial_{\mu}. \label{4}%
\end{equation}

Now, we can show the following identity\footnote{Observe that identity given
by Eq. (\ref{5}) is also valid in a general Lorentzian manifold.}%
\begin{equation}
\dot{v}=\left.  V\lrcorner(\boldsymbol{\partial}V)\right\vert _{\sigma
}=\left.  V\lrcorner(\boldsymbol{\partial}\wedge V)\right\vert _{\sigma
}=\left.  V\lrcorner(dV)\right\vert _{\sigma} \label{5}%
\end{equation}
and thus we can write Eq. (\ref{2}) as
\begin{equation}
\left.  V\lrcorner\lbrack d(mV-eA)]\right\vert _{\sigma}=0. \label{6}%
\end{equation}
In what follows we suppose that Eq. (\ref{6}) holds for each integral line of
the vector field $\mathbf{V}=\mathtt{\eta}(V,~)$, i.e.,%
\begin{equation}
V\lrcorner\lbrack d(mV-eA)]=0. \label{6a}%
\end{equation}

A sufficient condition for the validity of Eq. (\ref{6a}) is, of course the
existence of a scalar function $S$ such that%

\begin{equation}
mV-eA=-dS=-\boldsymbol{\partial}S. \label{7}%
\end{equation}
We immediately recognize $\Pi:=-\boldsymbol{\partial}S$ as the canonical
momentum and of course taking into account that $V^{2}=1$ we get from Eq.
(\ref{7}) that
\begin{equation}
(\Pi+eA)^{2}=m^{2} \label{8}%
\end{equation}
which is the relativistic Hamilton Jacobi equation \cite{ll}.

\section{A Classical Dirac-Hestenes Equation}

To proceed, we recall that it is always possible to choose a gauge for the
potential such that the components $\Pi_{\mu}$ of the canonical momentum are
constant. We suppose in what follows that the potentials are already in this
gauge. Moreover, we recall that an invertible representative $\psi$ (in the
Clifford bundle) of Dirac-Hestenes spinor field can be written
as\footnote{Recall that $\gamma^{5}=\tau_{\boldsymbol{g}}\in\sec%
{\textstyle\bigwedge\nolimits^{4}}
T^{\ast}M\hookrightarrow\sec\mathcal{C\ell}(M,\mathtt{g})$ is the volume
element $4$-form field.}%

\begin{equation}
\psi=\rho^{1/2}e^{\frac{\gamma^{5}\beta}{2}}\boldsymbol{R}\in\sec(%
{\textstyle\bigwedge\nolimits^{0}}
T^{\ast}M+%
{\textstyle\bigwedge\nolimits^{2}}
T^{\ast}M+%
{\textstyle\bigwedge\nolimits^{4}}
T^{\ast}M)\hookrightarrow\sec\mathcal{C\ell}(M,\mathtt{\eta}) \label{9}%
\end{equation}
where $\rho,\beta\in\sec%
{\textstyle\bigwedge\nolimits^{0}}
T^{\ast}M$ and for each $x\in M$, $\boldsymbol{R}(x)\in\mathrm{Spin}_{1,3}%
^{0}\simeq\mathrm{Sl}(2,\mathbb{C})$ and $\boldsymbol{RR}^{-1}=\boldsymbol{R}%
^{-1}\boldsymbol{R}=1$ and $\boldsymbol{R=\tilde{R}}$ is called a rotor. Of
course. $\psi^{-1}=\boldsymbol{R}^{-1}\rho^{-1/2}e^{-\frac{\gamma^{5}\beta}%
{2}}$.

Next, we choose $\psi$ such that
\begin{equation}
V=\psi\gamma^{0}\psi^{-1}=e^{\gamma^{5}\beta}\boldsymbol{R}\gamma
^{0}\boldsymbol{R}^{-1}. \label{10}%
\end{equation}

Since $V\in\sec%
{\textstyle\bigwedge\nolimits^{1}}
T^{\ast}M\hookrightarrow\sec\mathcal{C\ell}(M,\mathtt{\eta})$, necessarily the
Takabayashi angle $\beta$ must be $0$ or $\pi$. As we said in the introduction
we will call spinor fields satisfying this condition classical spinor fields.
We now write $\boldsymbol{R}=R(\Pi)$ $e^{S\gamma^{21}}$ where for each $x\in
M$, $R(\Pi)\in\mathrm{Spin}_{1,3}^{0}$ is a rotor depending on $\Pi$ (see Eq.
(\ref{19}) below). We then return to Eq. (\ref{7}) and multiply it on the
right with $\psi$ getting%
\begin{equation}
\Pi\psi=-\boldsymbol{\partial}S\psi=m\psi\gamma^{0}-eA\psi=0. \label{11}%
\end{equation}
We now ask: is it possible to define a differential operator $\mathbf{\hat
{\Pi}}$ acting on sections of the Clifford bundle such that
\begin{equation}
\mathbf{\hat{\Pi}}\psi=\Pi\psi. \label{12}%
\end{equation}

The answer is yes if we define
\begin{equation}
\mathbf{\hat{\Pi}}\psi=\boldsymbol{\partial}\psi\gamma^{21} \label{13}%
\end{equation}
and use the particular classical spinor field
\begin{equation}
\psi=R(\Pi)e^{S\gamma^{21}}. \label{14}%
\end{equation}

Indeed, in this case it is%
\begin{equation}
\boldsymbol{\partial}\psi\gamma^{21}=-\boldsymbol{\partial}S\psi=\Pi\psi.
\label{15}%
\end{equation}

So, the classical spinor field given by Eq. (\ref{14}) satisfies the first
order partial differential equation\footnote{Notice that we are using a
natural system of units where the numerical values of Planck constant $\hbar$
and the speed of light $c$ are equal to one.}
\begin{equation}
\boldsymbol{\partial}\psi\gamma^{21}-m\psi\gamma^{0}+eA\psi=0. \label{16}%
\end{equation}

\begin{remark}
\emph{Eq. (\ref{16})} for a general Dirac-Hestenes spinor field is known as
the Dirac-Hestenes equation \emph{\cite{hestenes2015}} which is a
representative in the Clifford bundle \emph{(}see Appendix C\emph{)} of the
traditional Dirac equation for a covariant Dirac spinor field $\Psi\in\sec
P_{\mathrm{spin}^{0}1,3}(M,\mathtt{\eta})\times_{\rho}\mathbb{C}^{4}$ which
read in a chart for $M$ with coordinates in Einstein-Lorentz-Poincar\'{e}
gauge\emph{\footnote{In Eq. (\ref{17}), the $\boldsymbol{\gamma}^{\mu}$, are
Dirac matrices in standard representation.}}
\begin{equation}
i\boldsymbol{\gamma}^{\mu}(\partial_{\mu}-ieA_{\mu})\Psi+m\Psi=0. \label{17}%
\end{equation}

\end{remark}

So, we have proved that there is a class of classical spinor fields (the ones
satisfying Eq. (\ref{14}) such that the relativistic Hamilton-Jacobi equation
is equivalent to the celebrated Dirac equation. Of course, it is a trivial
exercise to show that starting from Eq. (\ref{16}) for a spinor field
satisfying Eq. (\ref{14}) we get the relativistic Hamilton-Jacobi equation.

So, to complete this section we need to determine the rotor $R(\Pi)$. Note
that we must have
\begin{equation}
\Pi+eA=mV=mR\gamma^{0}R^{-1} \label{18}%
\end{equation}
and then as shown in Appendix C we find%
\begin{equation}
R(\Pi)=\frac{m+(\Pi+eA)\gamma^{0}}{\left[  2\left(  m+\Pi_{0}+eA_{0}\right)
\right]  ^{1/2}}. \label{19}%
\end{equation}

\section{A Classical Non Linear Dirac-Hestenes Equation}

Note that supposing the validity of the classical HJE if instead of taking
$\psi$ as in Eq. (\ref{14}) we write
\begin{equation}
\psi=\varrho^{1/2}R(\Pi)e^{S\gamma^{21}}=\psi_{0}e^{S\gamma^{21}}, \label{20}%
\end{equation}
we get%
\begin{equation}
\boldsymbol{\partial}\psi\gamma^{21}=(\boldsymbol{\partial}\ln\psi_{0}%
)\psi\gamma^{21}-\boldsymbol{\partial}S\psi. \label{21}%
\end{equation}
Thus, substituting this result in Eq. (\ref{11}) we get a \emph{nonlinear
}Dirac-Hestenes equation \cite{rc2016}, namely
\begin{equation}
\boldsymbol{\partial}\psi\gamma^{21}=-m\psi\gamma^{0}+eA\psi
-(\boldsymbol{\partial}\ln\psi_{0})\psi\gamma^{21}. \label{22}%
\end{equation}

\begin{remark}
Notice that in this case $\boldsymbol{\partial}\ln\psi_{0}=\frac{1}%
{2}\boldsymbol{\partial}\ln\rho.$
\end{remark}

\section{The GHJE which Follows from the General Dirac Equation}

In this section we start from the Dirac-Hestenes \ equation satisfied by a
general Dirac-Hestenes spinor field whose representative in the Clifford
bundle in a given spin frame is written as%
\begin{equation}
\boldsymbol{\psi}=\boldsymbol{\rho}^{1/2}R(\Pi)e^{\frac{\beta\gamma^{5}}{2}%
}e^{S\gamma^{21}}=\boldsymbol{\psi}_{0}e^{\frac{\beta\gamma^{5}}{2}}%
e^{S\gamma^{21}}=e^{\frac{\beta\gamma^{5}}{2}}\boldsymbol{\psi}_{0}%
e^{S\gamma^{21}}. \label{23}%
\end{equation}
Thus we have
\begin{equation}
\boldsymbol{\partial\psi}\gamma^{21}=(\boldsymbol{\partial}\ln\boldsymbol{\psi
}_{0})\boldsymbol{\psi}\gamma^{21}-\boldsymbol{\partial}S\boldsymbol{\psi
}-\frac{1}{2}\gamma^{5}\boldsymbol{\partial}(\ln\beta)\boldsymbol{\psi}
\label{24}%
\end{equation}
and the Dirac-Hestenes equation becomes
\begin{equation}
(\boldsymbol{\partial}\ln\boldsymbol{\psi}_{0})\boldsymbol{\psi}\gamma
^{21}-\boldsymbol{\partial}S\boldsymbol{\psi}-\frac{1}{2}\gamma^{5}%
\boldsymbol{\partial}(\ln\beta)\boldsymbol{\psi}-m\boldsymbol{\psi}\gamma
^{0}+eA\boldsymbol{\psi}=0. \label{25}%
\end{equation}
Multiplying Eq. (\ref{25}) on the right by $\boldsymbol{\psi}^{-1}$ and
identifying $\Pi=-\boldsymbol{\partial}S$ we get putting%
\begin{equation}
\boldsymbol{\psi}\gamma^{0}\boldsymbol{\psi}^{-1}=e^{\beta\gamma^{5}}V
\end{equation}
that%
\begin{equation}
\Pi=me^{\beta\gamma^{5}}V+eA+(\boldsymbol{\partial}\ln\boldsymbol{\psi}%
_{0})\boldsymbol{\psi}\gamma^{21}\boldsymbol{\psi}^{-1}+\frac{1}{2}\gamma
^{5}\boldsymbol{\partial}(\ln\beta)\boldsymbol{\psi,} \label{26}%
\end{equation}
which can be written as
\begin{equation}
\Pi=m\cos\beta V+eA+m\sin\beta\gamma^{5}V+(\boldsymbol{\partial}%
\ln\boldsymbol{\psi}_{0})\boldsymbol{\psi}\gamma^{21}\boldsymbol{\psi}%
^{-1}+\frac{1}{2}\gamma^{5}\boldsymbol{\partial}(\ln\beta)\boldsymbol{\psi.}
\label{27}%
\end{equation}

Taking into account that $\Pi=-\boldsymbol{\partial}S$ is a $1$-form field, we
must necessarily have, for consistency, the following constraint for any
solution that implies a genuine classical like equation of motion:%
\begin{equation}
\langle m\sin\beta\gamma^{5}V+(\boldsymbol{\partial}\ln\boldsymbol{\psi}%
_{0})\boldsymbol{\psi}\gamma^{21}\boldsymbol{\psi}^{-1}+\frac{1}{2}\gamma
^{5}\boldsymbol{\partial}(\ln\beta)\boldsymbol{\psi\rangle}_{3}=0 \label{28}%
\end{equation}

If the constraint given by Eq. (\ref{28}) is satisfied then the classical like
equation of motion for the particle is the following \emph{generalized}
Hamilton-Jacobi equation%
\begin{equation}
-\boldsymbol{\partial}S=m\cos\beta V+eA+\langle m\sin\beta\gamma
^{5}V+(\boldsymbol{\partial}\ln\boldsymbol{\psi}_{0})\boldsymbol{\psi}%
\gamma^{21}\boldsymbol{\psi}^{-1}+\frac{1}{2}\gamma^{5}\boldsymbol{\partial
}(\ln\beta)\boldsymbol{\psi\rangle}_{1} \label{29}%
\end{equation}

\begin{remark}
The true \textquotedblleft quantum potential\textquotedblright\ is then
\begin{equation}
Q=\langle(\boldsymbol{\partial}\ln\boldsymbol{\psi}_{0})\boldsymbol{\psi
}\gamma^{21}\boldsymbol{\psi}^{-1}+\frac{1}{2}\gamma^{5}\boldsymbol{\partial
}(\ln\beta)\boldsymbol{\psi\rangle}_{1} \label{30}%
\end{equation}
which differs considerably form the usual Bohm quantum potential. Moreover and
contrary to the usual presentations of the de Broglie-Bohm theory the mass
parameter of the\ particle in the generalized Hamilton-Jacobi equation
\emph{(Eq. (\ref{29}))} is not a constant. Instead, it is%
\begin{equation}
m^{\prime}=m\cos\beta. \label{31}%
\end{equation}

Some results analogous to the ones above but involving classical like
equations of motion instead of the generalized Hamilton Jacobi equation
\emph{(Eq. (\ref{29})) }have been obtained by Hestenes in memorable
papers\emph{ \cite{hestenes1,hestenes2,hestenes3}.}
\end{remark}

\section{Description of the Spin}

In the past sections we associated to a massive and charged particle a
Dirac-Hestenes spinor field satisfying Dirac equation which has been shown to
be equivalent to the relativistic HJE. We next show \cite{rvp,rc2016} how to
describe with the same classical spinor field the intrinsic spin of the
particle. In order to do that it is necessary to have in mind the concepts of
Fermi derivative and the Frenet formalism. For the reader's convenience these
concepts are briefly recalled in Appendix C.\smallskip

As in previous sections, the arena for the motion of particles is Minkowski
spacetime $(M\simeq\mathbb{R}^{4},\boldsymbol{\eta},D,\tau_{\boldsymbol{\eta}%
},\uparrow)$ for which there are global tetrad frames. So, let
$\{\boldsymbol{e}_{\mathbf{a}}\}$ $\in\sec\mathbf{P}_{\mathrm{SO}_{1,3}^{e}%
}(M)$ be one of these global tetrad frames. Let $\{\gamma^{\mathbf{a}%
}\},\gamma^{\mathbf{a}}\in\sec\bigwedge\nolimits^{1}T^{\ast}M\hookrightarrow
\sec\mathcal{C\ell}(M,\mathtt{g})$ be the dual frame of $\{\boldsymbol{e}%
_{\mathbf{a}}\}$. Also,$\ $let $\{\gamma_{\mathbf{a}}\}$ be the reciprocal
frame of $\{\gamma^{\mathbf{a}}\}$, i.e., $\gamma^{\mathbf{a}}\cdot
\gamma_{\mathbf{b}}=\delta_{\mathbf{b}}^{\mathbf{a}}$. Suppose moreover that
the \textit{reference frame}\footnote{In Relativity theory a general reference
frame in a general Lorentzian spacetime $(M\simeq\mathbb{R}^{4},\boldsymbol{g}%
,\boldsymbol{D},\tau_{\boldsymbol{g}},\uparrow)$ is a time vector field
$\boldsymbol{Z}$ pointing to the future and such that $\boldsymbol{g}%
(\boldsymbol{Z,Z})=1$. In \cite{rc2016} the reader may find a classification
of reference frames in Lorentzian spacetimes. For a preliminary classification
of reference frames in Riemann-Cartan spacetimes see \cite{gr}.} defined by
$\boldsymbol{e}_{\mathbf{0}}$ is in free fall, i.e., $D_{\boldsymbol{e}%
_{\mathbf{0}}}\boldsymbol{e}_{\mathbf{0}}=0$ and that the spatial axes along
each one of the integral lines of $\boldsymbol{e}_{\mathbf{0}}$ have been
constructed by Fermi transport of spinning gyroscopes. This is translated\ by
the requirement that $D_{\boldsymbol{e}_{\mathbf{0}}}\boldsymbol{e}%
_{\mathbf{i}}=0$, $\mathbf{i=}1,2,3$ , and we have, equivalently
$D_{\boldsymbol{e}_{0}}\gamma^{\mathbf{a}}=0$. We introduce a spin
coframe\footnote{Details are in \cite{rc2016}.} $\Xi\in P_{\mathrm{Spin}%
_{1,3}^{e}}(M)$ such that $s(\pm\Xi)=\{\gamma^{\mathbf{a}}\}$. Now, let $\Psi$
be the representative of an invertible Dirac-Hestenes spinor field
over\ $\sigma$ (the world line of a spinning particle) in the spin coframe
$\Xi$. Let moreover $\{f_{\mathbf{a}}\}$ be Frenet coframe over $\sigma$ such
that $f_{\mathbf{0}}$ satisfies $\mathtt{g}(f_{\mathbf{0}},~)=\sigma_{\ast}$.
Then, since the general form of a representative of an invertible
Dirac-Hestenes over $\sigma$ is $\Psi=\rho^{\frac{1}{2}}e^{\frac{\beta
\gamma^{5}}{2}}\boldsymbol{R}$ we can write for $\beta=0,$ $\pi$
\begin{equation}
f_{\mathbf{a}}=\Psi\gamma_{\mathbf{a}}\Psi^{-1}. \label{frenet11}%
\end{equation}
Recalling Eq. (\ref{frenet3}) from Appendix D, we obtain using Eq.
(\ref{frenet11}) that
\begin{equation}
D_{\boldsymbol{e}_{\mathbf{0}}}\boldsymbol{R}=\frac{1}{2}\Omega_{D}%
\boldsymbol{R}, \label{frenet12}%
\end{equation}
which may be called a \textit{spinor equation of motion} of a classical
spinning particle.\smallskip

Now, let us show that the spinor equation of motion for a \textit{free}
particle is equivalent to the classical Dirac-Hestenes equation.

We observe that in this case,of course $\Omega_{D}=\Omega_{\mathbf{S}}$.
Moreover, we can trivially redefine the Frenet frame in such a way as to have
$\kappa_{3}=0$. Indeed, this can be done by rotating the original frame with
$U=e^{f_{\mathbf{3}}f_{\mathbf{1}}\frac{\alpha}{2}}$ and choosing
$\alpha=\arctan\left(  -\frac{\kappa_{3}}{\kappa_{1}}\right)  $. So, in what
follows we suppose that this choice has already been made. We are interested
in the case where $\kappa_{2}$ is a real constant. Then, Eq. (\ref{frenet12})
becomes%
\begin{equation}
\boldsymbol{D}_{\boldsymbol{e}_{\mathbf{0}}}\boldsymbol{R}=\frac{1}{2}%
\kappa_{\mathbf{2}}f^{\mathbf{2}}f^{\mathbf{1}}\boldsymbol{R}.
\label{frenet13}%
\end{equation}
The solution of Eq. (\ref{frenet13})\ is%
\begin{equation}
\boldsymbol{R}=\boldsymbol{R}_{0}\exp(\frac{\kappa_{2}}{2}\gamma^{\mathbf{2}%
}\gamma^{\mathbf{1}}t), \label{frenet14}%
\end{equation}
where $\boldsymbol{R}_{0}$ is a constant rotor.\ 

To continue we observe that without any loss of generality we can choose a
global tetrad field such that $\gamma^{\mathbf{a}}=\delta_{\mu}^{a}dx^{\mu}$
(where $\{x^{\mu}\}$ are coordinates in Einstein-Lorentz-Poincar\'{e} gauge).

This choice being made we suppose next existence of a covector field $V\in
\sec\bigwedge\nolimits^{1}T^{\ast}M\hookrightarrow\sec\mathcal{C\ell
}(M,\mathtt{g})$ and a classical Dirac-Hestenes spinor field with
representative $\psi\in\sec\mathcal{C\ell}(M,\mathtt{\eta})$ in the spin
coframe $\Xi$ such that $\left.  V\right\vert _{\sigma}=v$ with%

\begin{equation}
\left.  V\right\vert _{\sigma}=\ \left.  \psi\gamma^{0}\psi^{-1}\right\vert
_{\sigma}=\boldsymbol{R}\gamma^{0}\boldsymbol{R}^{-1}. \label{frenetnew}%
\end{equation}
Then, under all these conditions, writing%
\begin{equation}
\psi=\psi_{0}(p)e^{\gamma^{\mathbf{21}}px} \label{frenet14a}%
\end{equation}
where
\begin{equation}
p=\frac{\kappa_{2}}{2}v \label{frenet14b}%
\end{equation}
and $x=x^{\mu}\gamma_{\mu}$, we can rewrite Eq. (\ref{frenet13}), identifying
$\left.  \psi_{0}(p)\right\vert _{\sigma}=\boldsymbol{R}_{0}$ as:
\begin{equation}
\boldsymbol{D}_{\boldsymbol{e}_{\mathbf{0}}}\boldsymbol{R}=\gamma^{0}%
\cdot\boldsymbol{\partial}\psi=\frac{1}{2}\kappa_{2}v^{0}\psi\gamma^{21}.
\label{frenet15}%
\end{equation}
\ which reduces to Eq. (\ref{frenet13}) in a reference frame $\boldsymbol{e}%
_{0}$ such that $\left.  \boldsymbol{e}_{0}\right\vert _{\sigma}=v$.

Putting $m=-\frac{\kappa_{2}}{2}$ we immediately obtain
\begin{equation}
\boldsymbol{\partial}\psi\gamma^{2}\gamma^{1}-m\psi\gamma^{0}=0.
\label{frenet20}%
\end{equation}
the classical Dirac-Hestenes equation. Note that the signal of $\kappa_{2}$
merely defines the sense of rotation in the $\boldsymbol{e}_{2}\wedge
\boldsymbol{e}_{1}$ plane.

The bilinear invariant $\Omega_{\mathcal{S}}\in\sec\bigwedge\nolimits^{2}%
T^{\ast}M\hookrightarrow\mathcal{C\ell}(M,\mathtt{g})$
\begin{equation}
\Omega_{\mathcal{S}}=k\psi\gamma^{2}\gamma^{1}\tilde{\psi} \label{frenet21}%
\end{equation}
is the s\emph{pin biform field}. Note that for our example $\mathcal{S}%
=\star\Omega_{\mathcal{S}}\llcorner v=k\psi\gamma^{3}\tilde{\psi}\in
\sec\bigwedge\nolimits^{1}T^{\ast}M\hookrightarrow\sec\mathcal{C\ell
}(M,\mathtt{\eta})$ and so it may be called the spin covector field.

The classical spinor equation for an electrical spinning particle interacting
with an external electromagnetic field can be easily obtained using the
principle of minimal coupling. We find that $p$ must be substituted by the
canonical momentum and we arrive at Eq. (\ref{16})

\begin{remark}
The results just obtained show that a natural interpretation suggests itself
for the plane `wave function' $\psi$ in the theory just presented. It
describes a kind of \textquotedblleft probability\textquotedblright\ field in
the sense that it describes a whole set of possible particle trajectories
which are non determined, as it is the case in Hamilton-Jacobi theory, unless
appropriate initial conditions for position and momentum are given
\emph{(}something we know is prohibit by Heisenberg uncertain
principle\emph{)}.Thus, it is, not a physical field in any sense. This last
observation agrees with de Broglie opinion in \emph{\cite{debroglie}} which he
uses to develop his theory of the double solution. We will now show how de
Broglie's idea can be realized in our theory.
\end{remark}

\section{Realization of De Broglie's Idea of the Double Solution}

In this section we will examine only the case of a free particle. We recall
that de Broglie \cite{debroglie} tried hard to constructed a theory where the
Dirac-Hestenes for a free particle equation besides having the $\psi
=R(p)e^{S\gamma^{21}}$\footnote{Of course, de Broglie used the standard matrix
formulation for the Dirac equation since the idea of Dirac-Hestenes spinor
fields where not known when he was investigating his theory the double
solution.} with statistical significance also possesses a \textquotedblleft
singular\textquotedblright\ solution of the form%
\begin{equation}
\digamma(x)=\digamma_{0}(x)e^{S(x)\gamma^{21}} \label{ds0}%
\end{equation}
where $\digamma$ is the representative in the Clifford Bundle\ of a real (not
fictitious)\ \emph{classical }Dirac-Hestenes spinor field describing the
motion of a singularity. De Broglie thought that $\digamma_{0}(x)$ must solve
a nonlinear equation. However, this is not the case. We show now that if
$\digamma$ satisfies the Dirac-Hestenes equation then $\digamma_{0}(x)$
satisfies the massless Dirac-Hestenes equation, i.e.,%
\begin{equation}
\boldsymbol{\partial}\digamma_{0}=0. \label{ds1}%
\end{equation}

Indeed, let us calculate $\boldsymbol{\partial}\digamma(x)$. We have
\begin{equation}
\boldsymbol{\partial}\digamma=(\boldsymbol{\partial}\digamma_{0}%
)e^{S\gamma^{21}}+\boldsymbol{\partial}S\digamma_{0}e^{S\gamma^{21}}%
\gamma^{21} \label{ds2}%
\end{equation}

But%
\begin{equation}
-\boldsymbol{\partial}S=P=mV=mF_{0}\gamma^{0}F_{0}^{-1}. \label{ds2a}%
\end{equation}

Using this result in Eq. (\ref{ds2})
\begin{align}
\boldsymbol{\partial}\digamma &  =(\boldsymbol{\partial}\digamma
_{0})e^{S\gamma^{21}}-mV\digamma_{0}e^{S\gamma^{21}}\gamma^{21}\label{ds4}\\
\boldsymbol{\partial}\digamma\gamma^{21}  &  =(\boldsymbol{\partial}%
\digamma_{0})e^{S\gamma^{21}}\gamma^{21}+m\digamma\gamma^{o}%
\end{align}
and since $\digamma$ satisfies the Dirac-Hestenes equation we have that
$\boldsymbol{\partial}\digamma_{0}=0$

The question that immediately comes to mind is: \smallskip

\emph{May }Eq. (\ref{ds1})\emph{ possess solutions satisfying the constraint
(\ref{ds2a}) describing the motion of a massive }(\emph{and spinning})\emph{
free particle moving with a subluminal velocity }$v$\emph{?\smallskip}

The answer to that question is \emph{yes}. Indeed, introducing the potential
$\mathcal{A}\in\sec%
{\textstyle\bigwedge\nolimits^{1}}
T^{\ast}M\hookrightarrow\sec\mathcal{C\ell}(M,\mathtt{g})$ and defining
\begin{equation}
\digamma_{0}:=\boldsymbol{\partial}\mathcal{A} \label{DS5}%
\end{equation}
we have immediately from Eq. (\ref{ds1}) that%
\begin{equation}
\boldsymbol{\partial}^{2}\mathcal{A}=0. \label{ds5a}%
\end{equation}

Now, we found in \cite{rv,rl} that Eq. (\ref{ds5a}) has \emph{subluminal
soliton like }solutions. Putting $x^{0}=t$, $x^{1}=x$, $x^{2}=y$, $x^{3}=z$
and supposing for simplicity that the wave is moving in the $z$-direction, a
subluminal solution rigidly moving with velocity (1-form) $v$ is
\begin{equation}
\mathcal{A=C}\frac{\sin m\xi}{\xi}\sin(\omega t-kz)\gamma^{1}. \label{DS6}%
\end{equation}
with%

\begin{gather*}
\xi=[(x)^{2}+(y)^{2}+\Gamma^{2}(z-vt)^{2}]^{\frac{1}{2}},\\
\Gamma=(1-v^{2})^{-\frac{1}{2}},\\
\omega^{2}-k^{2}=m^{2},\\
0<v=\frac{d\omega}{dk}<1.
\end{gather*}
and where $\mathcal{C}$ is a constant. Then, the moving soliton like object
realizing de Broglie dream is%

\begin{equation}
\digamma_{0}=\mathcal{C}\boldsymbol{\partial}\left(  \frac{\sin m\xi}{\xi}%
\cos(\omega t-kz)\right)  \gamma^{1}. \label{ds7}%
\end{equation}

Observe that in the rest frame of the soliton $\digamma_{0}\in\sec%
{\textstyle\bigwedge\nolimits^{2}}
T^{\ast}M\hookrightarrow\sec\mathcal{C\ell}(M,\mathtt{g})$ is simply
\begin{equation}
\digamma_{0}=\mathcal{C}m\left(  \frac{\sin m[(x)^{2}+(y)^{2}+(z)^{2}%
]^{\frac{1}{2}}}{[(x)^{2}+(y)^{2}+(z)^{2}]^{\frac{1}{2}}}\cos mt^{\prime
}\right)  \gamma^{\prime01}. \label{ds8}%
\end{equation}

\begin{remark}
Since $\gamma^{\prime01}=u\gamma^{01}u^{-1}$ and $u=e^{\chi\gamma^{30}}$ we
see that $F_{0}\gamma^{0}F_{0}^{-1}\in\sec%
{\textstyle\bigwedge\nolimits^{1}}
T^{\ast}M\hookrightarrow\sec\mathcal{C\ell}(M,\mathtt{g})$ and thus qualify
qualifies $F_{0}\gamma^{0}F_{0}^{-1}$ as a $1$-form velocity field according
to \emph{Eq. (\ref{ds2a}).}
\end{remark}

Some nontrivial problems that need to be investigated are: How does the field
$\digamma_{0}$ behaves when it meets an obstacle, e.g., a double slit
apparatus? How to describe the motion of more than one $\digamma$ like field
moving in Minkowski spacetime? We will return to this problem in another paper.

\section{Conclusions}

We proved that the relativistic Hamilton-Jacobi equation for a massive
particle charged particle (moving in Minkowski spacetime) and interacting with
an electromagnetic field is equivalent to a Dirac-Hestenes
equation\footnote{Which is the representative of the usual Dirac equation in
the Clifford bundle (a result recalled in Appendix A).} satisfied by a special
class of Dirac spinor fields\footnote{These special class of spinor fields
will be called classical spinor fields.} (characterized for having the
Takabayashi angle function equal to $0$ or $\pi$). Also, any Dirac-Hestenes
equation satisfied by these classical spinor fields implies in a corresponding
relativistic Hamilton-Jacobi equation.

After proving these results we recalled that general spinor fields which are
solutions of the Dirac equation in an external potential have in general
Takabayashi angle functions \cite{boudet,daviau} which are not constant
functions). Thus, for these general spinor fields the derived generalized
Hamilton-Jacobi equation (GHJE) besides having a quantum potential which must
satisfy a very severe constraint (see Eq. (\ref{28})) has also a variable mass
(which is a function of the Takabayashi angle function). So, the usual
\ Hamilton-Jacobi like equations derived from Schr\"{o}dinger equation used in
simulations of, e.g., the double slit experiments, do not take into account
that the mass of the particle which comes from the GHJE becomes a variable.
This eventually must explain discrepancies found between theoretical
predictions from the\ de Broglie and Bohm formalism and some results of
experiments\ and inconsistencies (such as necessity of surreal trajectories)
as related, by several authors (cited in the Section 1)\footnote{But on this
issue see also \cite{hc1,ghc2016}.}. We have also shown that de Broglie's
double solution theory\ can be realized at least for the case of a spinning
free particle. However, contrary to de Broglie's suggestion \cite{debroglie}
we found that the physical field $\digamma_{0}$ in Eq. (\ref{ds1}) satisfies
the massless Dirac equation instead of a nonlinear field. This finding
motivates the investigation following questions:

(i) how does the field $\digamma_{0}$ behave when it meets an obstacle, e.g.,
a double slit apparatus?

(ii) how do we describe the motion of more than one $\digamma$ like fields
moving in Minkowski spacetime?

These issues will be investigated in another paper.

The above results have been proved using the powerful Clifford and
spin-Clifford bundles formalism where Dirac spinor fields are represented by
an equivalence class of even sections of the Clifford bundle $\mathcal{C\ell
}(M,\mathtt{\eta})$ of differential forms. The Appendices present this
formalism, the nomenclature and proofs of some results appearing in the main text.

Finally, while preparing this version of the paper we have been informed by
Professor Basil Hiley of his papers \cite{hc1,hc2,hc3}. In particular, the
last two papers contain material related to our paper but with different\ and
conflicting results which we intend to discuss in another publication.

\appendix{}

\section{Preliminaries}

Let $M$ be a four dimensional, real, connected, paracompact and non-compact
manifold. We recall that a Lorentzian manifold is a pair $(M,\boldsymbol{g})$,
where $\boldsymbol{g}\in\sec T_{2}^{0}M$ is a Lorentzian metric of signature
$(1,3)$, i.e., $\forall x\in M,T_{x}M\simeq T_{x}^{\ast}M\simeq\mathbb{R}%
^{1,3}$, where $\mathbb{R}^{1,3}$ is the Minkowski vector space. We define a
Lorentzian spacetime $M$ as pentuple $(M,\boldsymbol{g},\boldsymbol{D}%
,\tau_{\boldsymbol{g}},\uparrow)$, where $(M,\boldsymbol{g},\tau
_{\boldsymbol{g}},\uparrow)$) is an oriented Lorentzian manifold (oriented by
$\tau_{\boldsymbol{g}}$) and time oriented by $\uparrow$, and $\boldsymbol{D}$
is the Levi-Civita connection of $\boldsymbol{g}$. Let $\mathcal{U}\subseteq
M$ be an open set covered by coordinates $\{x^{\mu}\}$. Let $\{e_{\mu
}=\partial_{\mu}\}$ be a coordinate basis of $T\mathcal{U}$ and
$\{\boldsymbol{\vartheta}^{\mu}=dx^{\mu}\}$ the dual basis on $T^{\ast
}\mathcal{U}$, i.e., $\boldsymbol{\vartheta}^{\mu}(\partial_{\nu})=\delta
_{\nu}^{\mu}$. If $\boldsymbol{g}=g_{\mu\nu}\boldsymbol{\vartheta}^{\mu
}\otimes\boldsymbol{\vartheta}^{\nu}$ is the metric on $T\mathcal{U}$ we
denote by $\mathtt{g}=g^{\mu\nu}\boldsymbol{\partial}_{\mu}\otimes
\boldsymbol{\partial}_{\nu}$ the metric of $T^{\ast}\mathcal{U}$, such that
$g^{\mu\rho}g_{\rho\nu}=\delta_{\nu}^{\mu}$. We introduce also
$\{\boldsymbol{\partial}^{\mu}\}$ and $\{\boldsymbol{\vartheta}_{\mu}\}$,
respectively, as the reciprocal bases of $\{e_{\mu}\}$ and
$\{\boldsymbol{\vartheta}_{\mu}\}$, i.e., we have
\begin{equation}
\boldsymbol{g}(\boldsymbol{\partial}_{\nu},\boldsymbol{\partial}^{\mu}%
)=\delta_{\nu}^{\mu},~~~\mathtt{g}(\boldsymbol{\vartheta}^{\mu}%
,\boldsymbol{\vartheta}_{\nu})=\delta_{\nu}^{\mu}. \label{p1}%
\end{equation}

In what follows $\mathbf{P}_{\mathrm{SO}_{1,3}^{e}}(M,\boldsymbol{g})$
($P_{\mathrm{SO}_{1,3}^{e}}(M,\mathtt{g})$) denotes the principal bundle of
oriented Lorentz tetrads (cotetrads).

A \emph{spin structure} for a general $m$-dimensional manifold $M$ ($m=p+q$)
equipped with a metric field $\boldsymbol{g}$ is a principal fiber bundle
$\pi_{s}:P_{\mathrm{Spin}_{p,q}^{e}}(M,\mathtt{g})\rightarrow M$, (called the
Spin Frame Bundle) with a group $\mathrm{Spin}_{p,q}^{e}$ such that there
exists a map%
\begin{equation}
\Lambda:P_{\mathrm{Spin}_{p,q}^{e}}(M,\mathtt{g})\rightarrow P_{\mathrm{SO}%
_{p,q}^{e}}(M,\mathtt{g}),
\end{equation}
satisfying the following conditions:

\begin{definition}
\textbf{(i)} $\pi(\Lambda(p))=\pi_{s}(p),\forall p\in P_{\mathrm{Spin}%
_{p,q}^{e}}(M,\mathtt{g})$, where $\pi$ is the projection map of the bundle
$\pi:P_{\mathrm{SO}_{p,q}^{e}}(M,\mathtt{g})\rightarrow M$.

(ii) $\Lambda(pu)=\Lambda(p)\mathrm{Ad}_{u},\forall p\in P_{\mathrm{Spin}%
_{p,q}^{e}}(M,\mathtt{g})$ and $\mathrm{Ad}:\mathrm{Spin}_{p,q}^{e}%
\rightarrow\mathrm{SO}_{p,q}^{e},\break\mathrm{Ad}_{u}(a)=uau^{-1}.$
\end{definition}

Any section of $P_{\mathrm{Spin}_{p,q}^{e}}(M,\mathtt{g})$ is called a\emph{
spin frame field} \emph{(}or simply a spin frame). We shall use symbol $\Xi
\in\sec P_{\mathrm{Spin}_{p,q}^{e}}(M,\mathtt{g})$ to denote a spin frame.

It can be shown that\footnote{Where $Ad:\mathrm{Spin}_{1,3}^{e}\rightarrow
\mathrm{End}(\mathbb{R}_{1,3})$ is such that $Ad(u)a=uau^{-1}$ and
$\rho:\mathrm{SO}_{1,3}^{e}\rightarrow\mathrm{End(}\mathbb{R}_{1,3}\mathrm{)}$
is the natural action of $\mathrm{SO}_{1,3}^{e}$ on $\mathbb{R}_{1,3}$.}
\cite{rc2016}:%
\begin{equation}
\mathcal{C}\ell(M,\mathtt{\eta})=P_{\mathrm{SO}_{1,3}^{e}}(M,\mathtt{\eta
})\times_{\rho}\mathbb{R}_{1,3}=P_{\mathrm{Spin}_{1,3}^{e}}(M,\mathtt{\eta
})\times_{Ad}\mathbb{R}_{1,3}, \label{eq_cliff}%
\end{equation}
and since\footnote{Given the objets $A$ and $B$, $A$ $\hookrightarrow$ $B$
means as usual that $A$ is embedded in $B$ and moreover, $A\subseteq B$. In
particular, recall that there is a canonical vector space isomorphism between
$\bigwedge\mathbb{R}^{1,3}$ and $\mathbb{R}_{1,3}$, which is written
$\bigwedge\mathbb{R}^{1,3}\hookrightarrow\mathbb{R}_{1,3}$. Details in
\cite{cru,lawmi}.} $\bigwedge TM\hookrightarrow\mathcal{C}\ell(M,\mathtt{\eta
})$, sections of $\mathcal{C}\ell(M,\mathtt{\eta})$ (the Clifford fields) can
be represented as a sum of non homogeneous differential forms. Notice that
$\mathbb{R}_{1,3}$ (the so-called spacetime algebra) is the Clifford algebra
associated with a $4$-dimensional real vector space $(\mathbb{R}^{4})$
equipped with a metric $\mathbf{G}$ of signature $(1,3)$. The pair
$(\mathbb{R}^{4},\mathbf{G})$ is denoted $\mathbb{R}^{1,3}$ and called
\emph{Minkowski vector} \emph{space, }which is not to be confounded with
Minkowski spacetime.

For any parallelizable spacetime structure (as it is the case of Minkowski
spacetime used in the main text), we introduce the global tetrad basis
$\boldsymbol{e}_{\alpha},\alpha=0,1,2,3$ on $TM$ and in $T^{\ast}M$ the
cotetrad basis on $\{\boldsymbol{\gamma}^{\alpha}\}$, which are dual basis. We
introduce the reciprocal basis $\{\boldsymbol{e}^{\alpha}\}$ and
$\{\boldsymbol{\gamma}_{\alpha}\}$ of $\{\boldsymbol{e}_{\alpha}\}$ and
$\{\boldsymbol{\gamma}^{\alpha}\}$ satisfying
\begin{equation}
\boldsymbol{g}(\boldsymbol{e}_{\alpha},\boldsymbol{e}^{\beta})=\delta_{\alpha
}^{\beta},~~~\mathtt{g}(\boldsymbol{\gamma}^{\beta},\boldsymbol{\gamma
}_{\alpha})=\delta_{\alpha}^{\beta}. \label{p2}%
\end{equation}

Moreover, recall that\footnote{Where the matrix with entries $\eta
_{\alpha\beta}$ (or $\eta^{\alpha\beta}$) is the diagonal matrix
$(1,-1,-1,-1)$.}%
\begin{equation}
\boldsymbol{g}=\eta_{\alpha\beta}\boldsymbol{\gamma}^{\alpha}\otimes
\boldsymbol{\gamma}^{\beta}=\eta^{\alpha\beta}\boldsymbol{\gamma}_{\alpha
}\otimes\boldsymbol{\gamma}_{\beta},~~~\mathtt{g}=\eta^{\alpha\beta
}\boldsymbol{e}_{\alpha}\otimes\boldsymbol{e}_{\beta}=\eta_{\alpha\beta
}\boldsymbol{e}^{\alpha}\otimes\boldsymbol{e}^{\beta}. \label{P3}%
\end{equation}

In this work we have that exists a spin structure on\ the 4-dimensional
Lorentzian manifold $(M,\boldsymbol{g})$, since $M$ is parallelizable, i.e.,
$P_{\mathrm{SO}_{1,3}^{e}}(M,\mathtt{g})$ is trivial, because of the following
result due to Geroch \cite{geroch1,geroch2}:

\begin{theorem}
\label{gerochh}For a $4-$dimensional Lorentzian manifold$\ (M,\boldsymbol{g}%
)$, a spin structure exists if and only if $P_{\mathrm{SO}_{1,3}^{e}%
}(M,\mathtt{g})$ is a trivial bundle \emph{\cite{geroch1,geroch2}}
\end{theorem}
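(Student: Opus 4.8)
The plan is to establish the two implications separately; the forward direction (a spin structure exists $\Rightarrow$ $P_{\mathrm{SO}_{1,3}^{e}}(M,\mathtt{g})$ is trivial) is the substantive one, and it is there that the noncompactness of $M$ (assumed in the Preliminaries) does the real work. For the easy direction, assume $P_{\mathrm{SO}_{1,3}^{e}}(M,\mathtt{g})$ is trivial, say isomorphic as a principal bundle to $M\times\mathrm{SO}_{1,3}^{e}$. Then one takes $P_{\mathrm{Spin}_{1,3}^{e}}(M,\mathtt{g}):=M\times\mathrm{Spin}_{1,3}^{e}$, with right action by translation on the second factor, and defines $\Lambda:=\mathrm{id}_{M}\times\mathrm{Ad}$, where $\mathrm{Ad}:\mathrm{Spin}_{1,3}^{e}\to\mathrm{SO}_{1,3}^{e}$, $u\mapsto\mathrm{Ad}_{u}$, is the two-sheeted covering homomorphism. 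A direct check gives $\pi\circ\Lambda=\pi_{s}$ and $\Lambda(pu)=\Lambda(p)\,\mathrm{Ad}_{u}$, i.e.\ conditions (i)--(ii) of the Definition above, so a spin structure exists. Noncompactness plays no role here.

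For the converse, suppose a spin structure $(P_{\mathrm{Spin}_{1,3}^{e}}(M,\mathtt{g}),\Lambda)$ exists. First I would recall the standard obstruction-theoretic fact that the existence of such a lift of the oriented, time-oriented orthonormal frame bundle forces $w_{2}(TM)=0$; since a reduction to $\mathrm{SO}_{1,3}^{e}$ already supplies an orientation and a time orientation, we also have $w_{1}(TM)=0$. Next I would split off the timelike direction: averaging, through a partition of unity, local future-pointing $\boldsymbol{g}$-timelike vector fields — legitimate because the future timelike cone at each point of $M$ is convex — produces a global timelike vector field $Z$ normalized to $\boldsymbol{g}(Z,Z)=1$, whence $TM=L\oplus E$ with $L=\mathbb{R}Z$ trivial and $E=Z^{\perp}$ a rank-$3$ bundle on which $-\boldsymbol{g}$ is positive definite. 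Consequently $w(E)=w(TM)$, so $w_{1}(E)=0$ and $w_{2}(E)=0$, and choosing on $E$ the orientation compatible with those of $TM$ and $L$ reduces its structure group to $\mathrm{SO}(3)$.

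The crux is then to show that this rank-$3$ oriented bundle $E$ is trivial. Here I would use that a connected noncompact smooth $4$-manifold has the homotopy type of a CW complex of dimension $\le 3$, so that the obstructions to trivialising the $\mathrm{SO}(3)$-bundle $E$ over (a CW model of) $M$ can only lie in $H^{k+1}(M;\pi_{k}(\mathrm{SO}(3)))$ for $k\le 2$: the $1$-skeleton is harmless since $\mathrm{SO}(3)$ is connected; the primary obstruction in $H^{2}(M;\pi_{1}(\mathrm{SO}(3)))=H^{2}(M;\mathbb{Z}_{2})$ equals $w_{2}(E)$, which vanishes; and $H^{3}$ contributes nothing because $\pi_{2}(\mathrm{SO}(3))=\pi_{2}(S^{3})=0$. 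Hence $E$, and therefore $TM=L\oplus E$, is parallelizable; applying $\boldsymbol{g}$-Gram--Schmidt to a global frame and adjusting signs to respect the orientation and time orientation yields a global section of $P_{\mathrm{SO}_{1,3}^{e}}(M,\mathtt{g})$, so that bundle is trivial.

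I expect the main obstacle to be the careful handling of the converse, in two places: first, turning the time orientation into a genuinely \emph{trivial} timelike line subbundle of $TM$ (the convexity-of-light-cones averaging argument), and second, verifying that $w_{2}(TM)=0$ together with the low-dimensional homotopy type of $M$ really does kill every obstruction to parallelising the residual $\mathrm{SO}(3)$-bundle. It is worth flagging that noncompactness is essential precisely at the second step: for a \emph{compact} $4$-manifold the top obstruction would live in $H^{4}(M;\pi_{3}(\mathrm{SO}(3)))=H^{4}(M;\mathbb{Z})\cong\mathbb{Z}$ and can be nonzero, so the stated equivalence is a genuine feature of open spacetimes.
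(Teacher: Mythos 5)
Your proposal is essentially correct, but note that the paper itself does not prove this theorem at all: it is stated with a citation to Geroch's two papers, so there is no in-paper argument to compare against. What you have written is, in substance, the standard Geroch-style proof. The easy direction (trivial $P_{\mathrm{SO}_{1,3}^{e}}(M,\mathtt{g})$ gives the product spin structure $M\times\mathrm{Spin}_{1,3}^{e}$ with $\Lambda=\mathrm{id}_M\times\mathrm{Ad}$) is fine. For the converse, your chain of steps is sound: time-orientability yields a global unit timelike field (the convexity/partition-of-unity averaging is correct, and is in fact what the orientation $\uparrow$ in the spacetime pentuple already encodes), giving $TM=L\oplus E$ with $L$ trivial and $E$ an oriented rank-$3$ Riemannian bundle with $w_1(E)=w_2(E)=0$; an open connected $4$-manifold has the homotopy type of a complex of dimension $\le 3$; and the obstructions to a section of the principal $\mathrm{SO}(3)$-bundle then live in $H^{2}(M;\mathbb{Z}_2)$ (killed by $w_2(E)=0$) and $H^{3}(M;\pi_2(\mathrm{SO}(3)))=0$, so $E$ and hence $TM$ is trivial, and Gram--Schmidt produces a global section of $P_{\mathrm{SO}_{1,3}^{e}}(M,\mathtt{g})$. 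Two points are glossed rather than wrong and deserve a sentence each if you write this up: (i) the identification of the $\mathrm{Spin}_{1,3}^{e}$-lifting obstruction of the Lorentzian frame bundle with $w_2(TM)$ itself uses the homotopy equivalence $\mathrm{SO}_{1,3}^{e}\simeq\mathrm{SO}(3)$, i.e.\ essentially the same splitting $TM=L\oplus E$ you introduce afterwards; (ii) triviality of $E$ over a CW model of $M$ transfers to $M$ by homotopy invariance of bundles, which should be said explicitly. A slightly slicker variant of the same idea avoids $E$ altogether: the spin bundle has structure group $\mathrm{Spin}_{1,3}^{e}\simeq\mathrm{SL}(2,\mathbb{C})\simeq SU(2)$, which is $2$-connected, so over a $3$-complex it is trivial outright, and pushing a global section down through $\Lambda$ trivializes $P_{\mathrm{SO}_{1,3}^{e}}(M,\mathtt{g})$. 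Your closing remark about where noncompactness is indispensable (the top obstruction in $H^4(M;\pi_3)$ for compact $M$) is accurate and worth keeping.
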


The basis $\left.  \boldsymbol{\gamma}^{\alpha}\right\vert _{p}$ of
$(T_{p}M,\boldsymbol{g}_{p})\simeq\mathbb{R}^{1,3},p\in M$, generates the
algebra $\mathcal{C}\ell(T_{p}M,\mathtt{g})\simeq\mathbb{R}_{1,3}$. We have
that \cite{rc2016}%
\[
\mathrm{e}=\frac{1}{2}(1+\boldsymbol{\gamma}^{0})\in\mathbb{R}_{1,3}%
\]
is a primitive idempotent of $\mathbb{R}_{1,3}\simeq\mathbb{H}(2)$ (the so
called spacetime algebra)\footnote{We recall that $\mathcal{C\ell}(T_{x}%
^{\ast}M,\eta)\simeq\mathbb{R}_{1,3}$ the so-called spacetime algebra. Also
the even subalgebra of $\mathbb{R}_{1,3}$ denoted $\mathbb{R}_{1,3}^{0}$ is
isomorphic to te Pauli algebra $\mathbb{R}_{3,0}$, i.e., $\mathbb{R}_{1,3}%
^{0}\simeq\mathbb{R}_{3,0}$. The even subalgebra of the Pauli algebra
$\mathbb{R}_{3,0}^{0}:=\mathbb{R}_{3,0}^{00}$ is the quaternion algebra
$\mathbb{R}_{0,2}$, i.e., $\mathbb{R}_{0,2}\simeq\mathbb{R}_{3,0}^{0}$.
Moreover we have the identifications: $\mathrm{Spin}_{1,3}^{0}\simeq
\mathrm{Sl}(2,\mathbb{C})$, $\mathrm{Spin}_{3,0}\simeq\mathrm{SU}(2)$. For the
Lie algebras of these groups we have $\mathrm{spin}_{1,3}^{0}\simeq
\mathrm{sl}(2,\mathbb{C})$,$\ \mathrm{su}(2)\simeq\mathrm{spin}_{3,0}$. The
important fact to keep in mind for the understanding of some of the
identificastions we done below is that $\mathrm{Spin}_{1,3}^{0},\mathrm{spin}%
_{1,3}^{0}\subset\mathbb{R}_{3,0}\subset\mathbb{R}_{1,3}$ and $\mathrm{Spin}%
_{3,0},\mathrm{spin}_{3,0}\subset\mathbb{R}_{0,2}\subset\mathbb{R}_{1,3}%
^{0}\subset\mathbb{R}_{1,3}$.} and%
\[
\mathrm{f}=\frac{1}{2}(1+\boldsymbol{\gamma}^{0})\frac{1}{2}%
(1+i\boldsymbol{\gamma}^{2}\boldsymbol{\gamma}^{1})\in\mathbb{C\otimes
R}_{1,3}%
\]
is a primitive idempotent of $\mathbb{C\otimes R}_{1,3}$. Now, let
$I=\mathbb{R}_{1,3}\mathrm{e}$ and $I_{\mathbb{C}}=\mathbb{C\otimes R}%
_{1,3}\mathrm{f}$ be respectively the minimal left ideals of $\mathbb{R}%
_{1,3}$ and $\mathbb{C\otimes R}_{1,3}$ generated by $\mathrm{e}$ and
$\mathrm{f}$. Any $\phi\in I$ can be written as%
\[
\phi=\psi\mathrm{e}%
\]
with $\psi\in\mathbb{R}_{1,3}^{0}$. Analogously, any $\mathbf{\phi}\in
I_{\mathbb{C}}$ can be written as%
\[
\psi\mathrm{e}\frac{1}{2}(1+i\boldsymbol{\gamma}^{2}\boldsymbol{\gamma}^{1})
\]
with $\psi\in\mathbb{R}_{1,3}^{0}.$ Recall moreover that $\mathbb{C\otimes
R}_{1,3}\simeq\mathbb{R}_{4,1}\simeq\mathbb{C}(4)$. We can verify that%
\[
\left(
\begin{array}
[c]{cccc}%
1 & 0 & 0 & 0\\
0 & 0 & 0 & 0\\
0 & 0 & 0 & 0\\
0 & 0 & 0 & 0
\end{array}
\right)
\]
is a primitive idempotent of $\mathbb{C}(4)$ which is a matrix representation
of $\mathrm{f}$. In that way, there is a bijection between column spinors,
i.e., elements of $\mathbb{C}^{4}$ and the elements of $I_{\mathbb{C}}$.

Recalling that $\mathrm{Spin}_{1,3}^{e}\hookrightarrow\mathbb{R}_{1,3}^{0}$,
we give:

\begin{definition}
The left \emph{(}respectively right\emph{)} real spin-Clifford bundle of the
spin manifold $M$ is the vector bundle $\mathcal{C}\ell_{\mathrm{Spin}}%
^{l}(M,\mathtt{g})=P_{\mathrm{Spin}_{1,3}^{e}}(M,\mathtt{g})\times
_{l}\mathbb{R}_{1,3}$ \emph{(}respectively $\mathcal{C}\ell_{\mathrm{Spin}%
}^{r}(M,\mathtt{g})=P_{\mathrm{Spin}_{1,3}^{e}}(M,\mathtt{g})\times
_{r}\mathbb{R}_{1,3}$\emph{)} where $l$ is the representation of
$\mathrm{Spin}_{1,3}^{e}$ on $\mathbb{R}_{1,3}$ given by $l(a)x=ax$
\emph{(}respectively, where $r$ is the representation of $\mathrm{Spin}%
_{1,3}^{e}$ on $\mathbb{R}_{1,3}$ given by $r(a)x=xa^{-1}$\emph{)}. Sections
of $\mathcal{C}\ell_{\mathrm{Spin}}^{l}(M,\mathtt{g})$ are called left
spin-Clifford fields \emph{(}respectively right spin-Clifford fields\emph{)}.
\end{definition}

\begin{definition}
Let $\mathbf{e,f}\in\mathcal{C}\ell_{\mathrm{Spin}_{1,3}^{e}}^{l}%
(M,\mathtt{g})$ be a primitive global idempotents \emph{\footnote{We know that
global primitive idempotents exist because $M$ is parallelizable.
\[
\mathbf{e=\mathbf{[(\Xi}_{0},\frac{1}{2}(1+\boldsymbol{\gamma}^{0}%
)\mathbf{)]},f}=\mathbf{[(\Xi}_{0},\frac{1}{2}(1+\boldsymbol{\gamma}^{0}%
)\frac{1}{2}(1+i\boldsymbol{\gamma}^{2}\boldsymbol{\gamma}^{1})\mathbf{)]}%
\]
}}, respectively $\mathbf{e^{r},f}^{r}\in\mathcal{C}\ell_{\mathrm{Spin}%
_{1,3}^{e}}^{r}(M,\mathtt{g})$, and let $I(M,\mathtt{g})$ be the subbundle of
$\mathcal{C}\ell_{\mathrm{Spin}_{1,3}^{e}}^{l}(M,\mathtt{g})$ generated by the
idempotent, that is, if $\mathbf{\Psi}$ is a section of $I(M,\mathtt{g}%
)\subset\mathcal{C}\ell_{\mathrm{Spin}_{1,3}^{e}}^{l}(M,\mathtt{g})$, we have%

\begin{equation}
\mathbf{\Psi e}=\mathbf{\Psi},
\end{equation}

A section $\mathbf{\Psi}$ of $I(M,\mathtt{g})$ is called a left ideal
algebraic spinor field.
\end{definition}

\begin{definition}
A Dirac-Hestenes spinor field \emph{(DHSF)} associated with $\mathbf{\Psi}$ is
a section\footnote{$\mathcal{C}\ell_{\mathrm{Spin}_{1,3}^{e}}^{0l}%
(M,\mathtt{g})$ denotes the even subbundle of $\mathcal{C}\ell_{\mathrm{Spin}%
_{1,3}^{e}}^{l}(M,\mathtt{g})$} $\varPsi$ of $\mathcal{C}\ell_{\mathrm{Spin}%
_{1,3}^{e}}^{0l}(M,\mathtt{g})\subset\mathcal{C}\ell_{\mathrm{Spin}_{1,3}^{e}%
}^{l}(M,\mathtt{g})$ such that\footnote{For any $\mathbf{\Psi}$ the DHSF
always exist, see \cite{rc2016}.}
\begin{equation}
\mathbf{\Psi}=\varPsi\mathbf{e}. \label{DHSF}%
\end{equation}

\end{definition}

\begin{definition}
We denote the complexified left spin-Clifford bundle by%
\[
\mathbb{C}\ell_{\mathrm{Spin}_{1,3}^{e}}^{l}(M,\mathtt{g})=P_{\mathrm{Spin}%
_{1,3}^{e}}(M,\mathtt{g})\times_{l}\mathbb{C}\otimes\mathbb{R}_{1,3}\equiv
P_{\mathrm{Spin}_{1,3}^{e}}(M,\mathtt{g})\times_{l}\mathbb{R}_{1,4}.
\]

\end{definition}

\begin{definition}
An equivalent definition of a \emph{DHSF} is the following. Let $\mathbf{\Psi
}\in\sec\mathbb{C}\ell_{\mathrm{Spin}_{1,3}^{e}}^{l}(M,\mathtt{g})$ such that%
\[
\mathbf{\Psi f=\Psi.}%
\]
Then a \emph{DHSF} associated with $\mathbf{\Psi}$ is an even section
$\varPsi$ of $\mathcal{C}\ell_{\mathrm{Spin}_{1,3}^{e}}^{0l}(M,\mathtt{g}%
)\subset\mathcal{C}\ell_{\mathrm{Spin}_{1,3}^{e}}^{l}(M,\mathtt{g})$ such that%
\begin{equation}
\mathbf{\Psi}=\varPsi\mathbf{f}.
\end{equation}

\end{definition}

The matrix representations of $\varPsi$ \ and $\mathbf{\Psi}$ in
$\mathbb{C}(4)$ (denoted by the same letter) in the given spin basis
are\footnote{Note that in Eq. (\ref{D20}) the $\psi_{i}$ are functions from
$M$ to $\mathbb{R}$.}
\begin{equation}
\varPsi=\left(
\begin{array}
[c]{cccc}%
\psi_{1} & -\psi_{2}^{\ast} & \psi_{3} & \psi_{4}^{\ast}\\
\psi_{2} & \psi_{1}^{\ast} & \psi_{4} & -\psi_{3}^{\ast}\\
\psi_{3} & \psi_{4}^{\ast} & \psi_{1} & -\psi_{2}^{\ast}\\
\psi_{4} & -\psi_{3}^{\ast} & \psi_{2} & \psi_{1}^{\ast}%
\end{array}
\right)  ,~~~\mathbf{\Psi}=\left(
\begin{array}
[c]{cccc}%
\psi_{1} & 0 & 0 & 0\\
\psi_{2} & 0 & 0 & 0\\
\psi_{3} & 0 & 0 & 0\\
\psi_{4} & 0 & 0 & 0
\end{array}
\right)  \label{D20}%
\end{equation}

\subsection{The Hidden Geometrical Meaning of Spinors}

\textbf{DHSFs} unveil the hidden geometrical meaning of spinors (and spinor
fields). Indeed, consider $v\in\mathbb{R}^{1,3}\hookrightarrow\mathbb{R}%
_{1,3}$ a timelike \emph{covector}\ such that $v^{2}=1.$ The linear mapping,
belonging to $\mathrm{SO}_{1,3}^{e}$%
\begin{equation}
v\mapsto RvR^{-1}=Rv\tilde{R}=w,~R\in\mathrm{Spin}_{1,3}^{e},
\end{equation}
define a new covector $w$ such that $w^{2}=1.$ We can therefore fix a covector
$v$ and obtain all other unit timelike covectors by applying this mapping.
This same procedure can be generalized to obtain any type of timelike covector
starting from a fixed unit covector $v$. We define the linear mapping%
\begin{equation}
v\mapsto\psi v\tilde{\psi}=z \label{rotate}%
\end{equation}
to obtain $z^{2}=\rho^{2}>0$. Since $z$ can be written as $z=\rho Rv\tilde{R}%
$, we need%
\begin{equation}
\psi v\tilde{\psi}=\rho Rv\tilde{R}. \label{ddhsf}%
\end{equation}
If we write \ $\psi=\rho^{\frac{1}{2}}MR$ we need that $Mv\tilde{M}=v$ and the
most general solution is $M=e^{\frac{\tau_{\boldsymbol{g}}\beta}{2}}$, where
$\tau_{\boldsymbol{g}}=\gamma^{0}\gamma^{1}\gamma^{2}\gamma^{3}\in%
{\textstyle\bigwedge\nolimits^{4}}
\mathbb{R}^{1,3}\hookrightarrow\mathbb{R}_{1,3}$ and $\beta\in\mathbb{R}$ is
called the Takabayasi angle \cite{rc2016,vr2016}. Then IT follows that $\psi$
is of the form%
\begin{equation}
\psi=\rho^{\frac{1}{2}}e^{\frac{\tau_{\boldsymbol{g}}\beta}{2}}R.
\label{takabaiasi}%
\end{equation}

Now, Eq. (\ref{takabaiasi}) shows that $\psi\in\mathbb{R}_{1,3}^{0}%
\simeq\mathbb{R}_{3,0}$. Moreover, we have that $\psi\tilde{\psi}\neq0$ since%
\begin{equation}
\psi\tilde{\psi}=\rho e^{\tau_{\boldsymbol{g}}\beta}=(\rho\cos\beta
)+\tau_{\boldsymbol{g}}(\rho\sin\beta).
\end{equation}

A representative of a DHSF $\varPsi$ in the Clifford bundle $\mathcal{C}%
\ell(M,\mathtt{g})$ relative to a spin frame $\mathbf{\Xi}_{u}$ is a section
$\boldsymbol{\psi}_{\mathbf{\Xi}_{u}}=[(\mathbf{\Xi}_{u},\psi_{\mathbf{\Xi
}_{u}})]$ of $\mathcal{C}\ell^{0}(M,\mathtt{g})$ where $\psi_{\mathbf{\Xi}%
_{u}}\in\mathbb{R}_{1,3}^{0}\simeq\mathbb{R}_{3,0}$. So a DHSF such
$\boldsymbol{\psi}_{\mathbf{\Xi}_{u}}\boldsymbol{\tilde{\psi}}_{\mathbf{\Xi
}_{u}}\neq0$ induces a linear mapping induced by Eq. (\ref{rotate}), which
\emph{rotates} a covector field and \emph{dilate} it.

\section{Description of the Dirac Equation in the Clifford Bundle}

In the main text we utilized as arena for the motion of particles (and fields)
the Minkowski spacetime structure $(M\simeq\mathbb{R}^{4},\boldsymbol{\eta
},D,\tau_{\mathbf{\eta}})$. Object $\boldsymbol{\eta}\in\sec T_{0}^{2}M$ is
the Minkowski metric field and $D$ is the Levi-Civita connection of
$\boldsymbol{\eta}$. Also, $\tau_{\mathbf{\eta}}\in\sec%
{\textstyle\bigwedge\nolimits^{4}}
T^{\ast}M$ defines an orientation. We denote by $\mathtt{\eta}\in\sec
T_{2}^{0}M$ the metric of the cotangent bundle. It is defined as follows. Let
$\{x^{\mu}\}$ be coordinates for\ $M$ in the Einstein-Lorentz-Poincar\'{e}
gauge \cite{rc2016}. Let $\{\boldsymbol{e}_{\mu}=\partial/\partial x^{\mu}\}$
a basis for $TM$ and $\{\gamma^{\mu}=dx^{\mu}\}$ the corresponding dual basis
for $T^{\ast}M$, i.e., $\gamma^{\mu}(\boldsymbol{e}_{\alpha})=\delta_{\alpha
}^{\mu}$. Then, if $\boldsymbol{\eta}=\eta_{\mu\nu}\gamma^{\mu}\otimes
\gamma^{\nu}$ then $\mathtt{\eta}=\eta^{\mu\nu}\boldsymbol{e}_{\mu}%
\otimes\boldsymbol{e}_{\nu}$, where the matrix with entries $\eta_{\mu\nu}$
and the one with entries $\eta^{\mu\nu}$ are the equal to the diagonal matrix
$\mathrm{diag}(1,-1,-1,-1)$. If $a,b\in\sec%
{\textstyle\bigwedge\nolimits^{1}}
T^{\ast}M$\ we write $a\cdot b=\mathtt{\eta}(a,b)$. We also denote by
$\langle\gamma_{\mu}\rangle$ the reciprocal basis of $\{\gamma^{\mu}=dx^{\mu
}\}$, which satisfies $\gamma^{\mu}\cdot\gamma_{\nu}=\delta_{\nu}^{\mu}$.

We denote the Clifford bundle of differential forms in Minkowski spacetime by
$\mathcal{C\ell}(M,\eta)$ and use notations and conventions in what follows as
in \cite{rc2016} and recall the fundamental relation%
\begin{equation}
\gamma^{\mu}\gamma^{\nu}+\gamma^{\nu}\gamma^{\mu}=2\eta^{\mu\nu}. \label{1a}%
\end{equation}

If $\{\boldsymbol{\gamma}^{\mu},~$ $\mu=0,1,2,3\}$ are the Dirac gamma
matrices in the \emph{standard representation} and $\{\gamma_{\mu}%
,~\mu=0,1,2,3\}$ are as introduced above, we define%
\begin{align}
\sigma_{k}  &  :=\gamma_{k}\gamma_{0}\in\sec%
{\textstyle\bigwedge\nolimits^{2}}
T^{\ast}M\hookrightarrow\sec\mathcal{C\ell}^{0}(M,\eta)\text{, }%
k=1,2,3,\label{2a}\\
\mathbf{i}  &  =\gamma_{5}:=\gamma_{0}\gamma_{1}\gamma_{2}\gamma_{3}\in\sec%
{\textstyle\bigwedge\nolimits^{4}}
T^{\ast}M\hookrightarrow\sec\mathcal{C\ell}(M,\eta),\\
\boldsymbol{\gamma}_{5}  &  :=\boldsymbol{\gamma}_{0}\boldsymbol{\gamma}%
_{1}\boldsymbol{\gamma}_{2}\boldsymbol{\gamma}_{3}\in\mathbb{C(}4\mathbb{)}%
\end{align}

Noting that $M$ is parallelizable, given a global spin frame a \emph{covariant
spinor field} can be taken as a mapping $\mathbf{\Psi}:M\rightarrow
\mathbb{C}^{4}$ \ In standard representation of the gamma matrices where
($i=\sqrt{-1}$, $\boldsymbol{\phi},\boldsymbol{\varsigma}:M\rightarrow
\mathbb{C}^{2}$) $\boldsymbol{\psi}$ \ is given by
\begin{equation}
\boldsymbol{\Psi}=\left(
\begin{array}
[c]{c}%
\boldsymbol{\phi}\\
\boldsymbol{\varsigma}%
\end{array}
\right)  =\left(
\begin{array}
[c]{c}%
\left(
\begin{array}
[c]{c}%
m^{0}+im^{3}\\
-m^{2}+im^{1}%
\end{array}
\right) \\
\left(
\begin{array}
[c]{c}%
n^{0}+in^{3}\\
-n^{2}+in^{1}%
\end{array}
\right)
\end{array}
\right)  , \label{3a}%
\end{equation}
and to $\boldsymbol{\Psi}$ there corresponds the \textbf{DHSF} $\psi\in
\sec\mathcal{C\ell}^{0}(M,\eta)$ given by\footnote{Remember the
identification:%
\[
\mathbb{C}(4)\simeq\mathbb{R}_{4,1}\supseteq\mathbb{R}_{4,1}^{0}%
\simeq\mathbb{R}_{1,3}.
\]
}%
\begin{equation}
\psi=\phi+\varsigma\sigma_{3}=(m^{0}+m^{k}\mathbf{i}\sigma_{k})+(n^{0}%
+n^{k}\mathbf{i}\sigma_{k})\sigma_{3}. \label{4a}%
\end{equation}
We then have the useful formulas in Eq. (\ref{5a}) below that one can use to
immediately translate results of the standard matrix formalism in the language
of the Clifford bundle formalism and vice-versa\footnote{$\tilde{\psi}$ is the
reverse of $\psi$. If $A_{r}\in\sec%
{\textstyle\bigwedge\nolimits^{r}}
T^{\ast}M\hookrightarrow\sec\mathcal{C\ell}(M,\eta)$ then $\tilde{A}%
_{r}=(-1)^{\frac{r}{2}(r-1)}A_{r}$.}
\begin{align}
\boldsymbol{\gamma}_{\mu}\boldsymbol{\Psi}  &  \leftrightarrow\gamma_{\mu}%
\psi\gamma_{0},\nonumber\\
i\boldsymbol{\Psi}  &  \leftrightarrow\psi\gamma_{21}=\psi\mathbf{i}\sigma
_{3},\nonumber\\
i\boldsymbol{\gamma}_{5}\boldsymbol{\Psi}  &  \leftrightarrow\psi\sigma
_{3}=\psi\gamma_{3}\gamma_{0},\nonumber\\
\boldsymbol{\bar{\Psi}}  &  =\boldsymbol{\Psi}^{\dagger}\boldsymbol{\gamma
}^{0}\leftrightarrow\tilde{\psi},\nonumber\\
\boldsymbol{\Psi}^{\dagger}  &  \leftrightarrow\gamma_{0}\tilde{\psi}%
\gamma_{0},\nonumber\\
\boldsymbol{\Psi}^{\ast}  &  \leftrightarrow-\gamma_{2}\psi\gamma_{2}.
\label{5a}%
\end{align}

Using the above dictionary the standard (free) Dirac
equation\footnote{$\partial_{\mu}:=\frac{\partial}{\partial x^{\mu}}$.} for a
Dirac spinor field $\boldsymbol{\psi}:M\rightarrow\mathbb{C}^{4}$
\begin{equation}
i\boldsymbol{\gamma}^{\mu}\partial_{\mu}\boldsymbol{\Psi}-m\boldsymbol{\Psi}=0
\end{equation}
translates immediately in the so-called Dirac-Hestenes equation, i.e.,
\begin{equation}
\boldsymbol{\partial}\psi\gamma_{21}-m\psi\gamma_{0}=0.
\end{equation}

\section{Proof of Eq. (\ref{19})}

From Eqs. (\ref{14}), (\ref{15}) and (\ref{16}) \ we have that%
\begin{equation}
\Pi+eA=mV=mR\gamma^{0}R^{-1}=\boldsymbol{L}\gamma^{0}\boldsymbol{L}^{-1}
\label{B1}%
\end{equation}
This means that the rotor $R:M\ni x\mapsto R(x)\in\mathrm{Spin}_{1,3}^{0}$
must be a boost. We also know that every boost must be the exponential of a
biform \cite{zr,lounesto}.

Now, we observe that for any $x\in M$ if $X(x)$ is such that%
\begin{equation}
X(x)\lrcorner((\Pi+eA)\wedge\gamma^{0})=0
\end{equation}
we will have
\begin{equation}
RXR^{-1}=X \label{B3}%
\end{equation}
and thus we conclude that $R$ must be a biform proportional to $C=\gamma
^{0}\wedge(\Pi+eA)$. Indeed, since $(\Pi+eA)\wedge\gamma^{0}$ anticommutes
with $(\Pi+eA)$ and $\gamma^{0}$ we have that%
\begin{equation}
\frac{1}{m}\Pi+eA=R\gamma^{0}R^{-1}=R^{2}\gamma^{0}=V \label{B4}%
\end{equation}

As we can verify by direct computation the have%

\begin{equation}
R(V)=\frac{1+V\gamma^{0}}{1\left[  2\left(  1+V_{0}\right)  \right]  ^{1/2}}
\label{B5}%
\end{equation}
and thus we can write%

\begin{equation}
R(\Pi)=\frac{m+(\Pi+eA)\gamma^{0}}{\left[  2\left(  m+\Pi_{0}+A_{0}\right)
\right]  ^{1/2}} \label{B6}%
\end{equation}
with
\begin{equation}
\cosh\chi=\frac{1}{m}(\Pi_{0}+A_{0}) \label{B7}%
\end{equation}

Now, without any difficult we can show that
\begin{equation}
R(\Pi)=e^{\left(  \frac{\chi}{2}\frac{(\Pi+eA)\wedge\gamma^{0}}{\left\vert
(\Pi+eA)\wedge\gamma^{0}\right\vert }\right)  } \label{B8}%
\end{equation}

For the case where $A=0$ the canonical momentum is $P$, with $P^{2}=m^{2}$. In
this case \cite{dola} Eq. (\ref{B6}) is
\begin{equation}
R(P)=\frac{m+V\gamma^{0}}{\left[  2m\left(  m+P^{0}\right)  \right]  ^{1/2}}.
\label{B10}%
\end{equation}

\section{Rotation and Fermi Transport in the Clifford Bundle Formalism}

Let $\sigma:\mathbb{R}\supset I\rightarrow M$, $\tau\mapsto\sigma(\tau)$ be a
timelike curve on Minkowski spacetime. Let $\mathbf{Y}$ be a vector field over
$\sigma$. As well known \cite{sw} in order for an observer (following the
curve $\sigma$ to decide when a unitary vector $\mathbf{Y}\in(\sigma_{\ast
\tau})^{\perp}$ has the same spatial direction of the unitary vector
$\mathbf{Y}^{\prime}\in(\sigma_{\ast\tau^{\prime}})^{\perp}\ (\tau^{\prime
}\neq\tau)$, he has to introduce the concept of the \emph{Fermi-Walker
connection}. We have the

\begin{proposition}
There exists one and only one connection $\mathcal{F}$ over $\sigma$, such
that
\begin{equation}
\mathcal{F}_{\mathbf{X}}\mathbf{Y}=[\mathbf{p}(\sigma^{\ast}\boldsymbol{D}%
)_{\mathbf{X}}\mathbf{p}+\mathbf{q}(\sigma^{\ast}\boldsymbol{D})_{\mathbf{X}%
}\mathbf{q}]\mathbf{Y} \label{fermi1}%
\end{equation}
for all vector fields $\mathbf{X}$ on $I$ and for all vector fields
$\mathbf{Y}$ over $\sigma.$
\end{proposition}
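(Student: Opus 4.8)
The plan is to read the displayed identity as the \emph{definition} of $\mathcal{F}$. Existence then reduces to checking that the prescribed right-hand side genuinely defines a connection over $\sigma$, and uniqueness is immediate, since the formula determines $\mathcal{F}_{\mathbf{X}}\mathbf{Y}$ entirely in terms of data ($\sigma^{\ast}\boldsymbol{D}$, $\mathbf{p}$, $\mathbf{q}$) that do not involve $\mathcal{F}$.

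First I would record the properties of the projectors. Since $\sigma$ is timelike, $\sigma_{\ast\tau}$ is non-null for every $\tau$, so $T_{\sigma(\tau)}M$ splits orthogonally as $\mathrm{span}\{\sigma_{\ast\tau}\}\oplus(\sigma_{\ast\tau})^{\perp}$; let $\mathbf{p}$ and $\mathbf{q}$ be the corresponding pointwise orthogonal projectors. They are smooth along $\sigma$ (being built from $\sigma_{\ast}$ and $\boldsymbol{\eta}$), act pointwise (hence are $C^{\infty}(I)$-linear) on vector fields over $\sigma$, and satisfy $\mathbf{p}^{2}=\mathbf{p}$, $\mathbf{q}^{2}=\mathbf{q}$, $\mathbf{p}\mathbf{q}=\mathbf{q}\mathbf{p}=0$ and $\mathbf{p}+\mathbf{q}=\mathrm{id}$, together with the self-adjointness $\boldsymbol{\eta}(\mathbf{p}\mathbf{a},\mathbf{b})=\boldsymbol{\eta}(\mathbf{a},\mathbf{p}\mathbf{b})$, and likewise for $\mathbf{q}$. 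I would also recall that $\sigma^{\ast}\boldsymbol{D}$ is the well-defined pullback of the Levi-Civita connection, i.e. a covariant derivative along $\sigma$: $C^{\infty}(I)$-linear in $\mathbf{X}$, additive in $\mathbf{Y}$, Leibniz in $\mathbf{Y}$, and $\boldsymbol{\eta}$-compatible.

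For existence, set $\mathcal{F}_{\mathbf{X}}\mathbf{Y}:=[\mathbf{p}(\sigma^{\ast}\boldsymbol{D})_{\mathbf{X}}\mathbf{p}+\mathbf{q}(\sigma^{\ast}\boldsymbol{D})_{\mathbf{X}}\mathbf{q}]\mathbf{Y}$ and verify the connection axioms. The $C^{\infty}(I)$-linearity in $\mathbf{X}$ and the additivity in $\mathbf{Y}$ are inherited from $\sigma^{\ast}\boldsymbol{D}$, because $\mathbf{p}$ and $\mathbf{q}$ are pointwise. The only step needing care is the Leibniz rule: for $f\in C^{\infty}(I)$, expanding $\mathcal{F}_{\mathbf{X}}(f\mathbf{Y})$ via $(\sigma^{\ast}\boldsymbol{D})_{\mathbf{X}}(f\,\mathbf{p}\mathbf{Y})=(\mathbf{X}f)\mathbf{p}\mathbf{Y}+f(\sigma^{\ast}\boldsymbol{D})_{\mathbf{X}}\mathbf{p}\mathbf{Y}$ (and similarly with $\mathbf{q}$), the $f$-terms reassemble to $f\,\mathcal{F}_{\mathbf{X}}\mathbf{Y}$, while the inhomogeneous terms collapse to
\[
(\mathbf{X}f)\left(\mathbf{p}\mathbf{p}\mathbf{Y}+\mathbf{q}\mathbf{q}\mathbf{Y}\right)=(\mathbf{X}f)(\mathbf{p}+\mathbf{q})\mathbf{Y}=(\mathbf{X}f)\mathbf{Y},
\]
using precisely $\mathbf{p}^{2}=\mathbf{p}$, $\mathbf{q}^{2}=\mathbf{q}$ and $\mathbf{p}+\mathbf{q}=\mathrm{id}$. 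I expect this to be the only delicate point: it is exactly the completeness relation $\mathbf{p}+\mathbf{q}=\mathrm{id}$ that makes the derivative term come out right, and for a generic pair of pointwise operators the right-hand side would fail to be a connection. As byproducts I would note that $\mathcal{F}$ is $\boldsymbol{\eta}$-compatible (from self-adjointness of $\mathbf{p},\mathbf{q}$ together with the fact that a parallel component is $\boldsymbol{\eta}$-orthogonal to a perpendicular one) and that, when $\tau$ is proper time, $\mathcal{F}_{\partial_{\tau}}\sigma_{\ast}=0$ — since $\mathbf{q}\sigma_{\ast}=0$ kills the second term while $\boldsymbol{\eta}((\sigma^{\ast}\boldsymbol{D})_{\partial_{\tau}}\sigma_{\ast},\sigma_{\ast})=\tfrac12\partial_{\tau}\boldsymbol{\eta}(\sigma_{\ast},\sigma_{\ast})=0$ kills the first — which identifies $\mathcal{F}$ with the Fermi–Walker connection used below.

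Finally, uniqueness is immediate: if $\mathcal{F}'$ is any connection over $\sigma$ satisfying the displayed identity, then for all $\mathbf{X},\mathbf{Y}$ we have $\mathcal{F}'_{\mathbf{X}}\mathbf{Y}=[\mathbf{p}(\sigma^{\ast}\boldsymbol{D})_{\mathbf{X}}\mathbf{p}+\mathbf{q}(\sigma^{\ast}\boldsymbol{D})_{\mathbf{X}}\mathbf{q}]\mathbf{Y}=\mathcal{F}_{\mathbf{X}}\mathbf{Y}$, hence $\mathcal{F}'=\mathcal{F}$, which completes the argument.
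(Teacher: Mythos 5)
Your proposal is correct. Note that the paper does not actually spell out a proof of this proposition in the text; it only remarks that the statement is standard and refers the reader to \cite{rc2016}. Your argument---reading Eq. (\ref{fermi1}) as the definition of $\mathcal{F}$, checking $C^{\infty}(I)$-linearity in $\mathbf{X}$ and the Leibniz rule in $\mathbf{Y}$ (where, as you correctly isolate, the inhomogeneous terms reassemble to $(\mathbf{X}f)\mathbf{Y}$ precisely because $\mathbf{p}^{2}=\mathbf{p}$, $\mathbf{q}^{2}=\mathbf{q}$ and $\mathbf{p}+\mathbf{q}=\mathrm{id}$), and observing that uniqueness is immediate since the right-hand side involves no unknown data---is the standard verification given in such references and is complete as written; the remarks on $\boldsymbol{\eta}$-compatibility and $\mathcal{F}_{\epsilon_{\mathbf{0}}}\varepsilon_{\mathbf{0}}=0$ are correct byproducts consistent with the properties listed in Proposition \ref{fermi proposition}.
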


In Eq. (\ref{fermi1}) $\sigma^{\ast}\boldsymbol{D}$ is the \textit{induced}
connection over $\sigma$\textit{ }of the Levi-Civita connection
$\boldsymbol{D}$ and $\mathcal{F}$ is called the Fermi-Walker connection over
$\sigma$, and we shall use the notations $\mathcal{F}_{\sigma_{\ast}%
},\mathcal{F}/d\tau$ or $\mathcal{F}_{%
\mbox{\boldmath{$\varepsilon$}}%
_{0}}$ (see below) when convenient. We also will write (by abuse of language)
only $\boldsymbol{D}$, as usual, for $\sigma^{\ast}\boldsymbol{D}$ in what
follows. A proof of the theorem can be found, e.g., in \cite{rc2016}\newline

We recall that a \emph{moving (orthonormal) frame} $\{\epsilon_{\mathbf{a}}\}$
over $\sigma$\ is an orthonormal basis for $T_{\sigma(I)}M$ with
$\epsilon_{\mathbf{0}}=\sigma_{\ast}$. The set $\{\varepsilon^{\mathbf{a}}\},$
$\varepsilon^{\mathbf{a}}\in\sec T_{\sigma(I)}^{\ast}M$ $\hookrightarrow
\sec\mathcal{C\ell}(M,\mathtt{\eta})$ is the dual comoving frame on $\sigma$,
i.e., $\varepsilon^{\mathbf{a}}(\epsilon_{\mathbf{b}})=\delta_{\mathbf{b}%
}^{\mathbf{a}}.$The set $\{\varepsilon_{\mathbf{a}}\}$, $\varepsilon
_{\mathbf{a}}\in\sec T_{\sigma(I)}^{\ast}M$ $\hookrightarrow\sec
\mathcal{C\ell}(M,\mathtt{\eta})$, with $\varepsilon^{\mathbf{a}}%
\cdot\varepsilon_{\mathbf{b}}=\delta_{\mathbf{b}}^{\mathbf{a}}$ is the
reciprocal frame of $\{\varepsilon^{\mathbf{a}}\}$.

Let $\mathbf{X}$ and $\mathbf{Y}$ be vector fields over $\sigma$ and
$Y=\boldsymbol{\eta}(\mathbf{X,}$~$),Y=\boldsymbol{\eta}(\mathbf{Y,}$%
~$)\in\sec T_{\sigma(I)}^{\ast}M$ $\hookrightarrow\mathcal{C\ell
}(M,\mathtt{\eta})$ the physically equivalent $1$-form fields. We have the

\begin{proposition}
\label{fermi proposition}Let be $X,Y$ be form fields over $\sigma$, as defined
above. The Fermi-Walker\emph{%
\index{connection!Fermi-Walker}%
} connection $\mathcal{F}$ satisfies the properties

\emph{(a)} \qquad$\mathcal{F}_{\epsilon_{o}}Y=\boldsymbol{D}_{\epsilon
_{\mathbf{0}}}Y-(\varepsilon_{\mathbf{0}}\cdot Y)a+(a\cdot Y)\varepsilon_{0},$

where \ $a=\boldsymbol{D}_{\epsilon_{\mathbf{0}}}\varepsilon_{0},$ is the
\emph{(}$1$-form\emph{)} acceleration.

\emph{(b) }\qquad${\frac{d}{d\tau}}\ (X\cdot Y)=\mathcal{F}_{\epsilon
_{\mathbf{0}}}X\cdot Y+X\cdot\mathcal{F}_{\epsilon_{\mathbf{0}}}Y$

\emph{(c) }\qquad$\mathcal{F}_{\epsilon_{\mathbf{0}}}\varepsilon_{\mathbf{0}%
}=0$

\emph{(d) }\qquad If \textbf{$X$}$,\mathbf{Y}$\ are vector fields on $\sigma$
such that \textbf{$X$}$_{u},\mathbf{Y}_{u}\in H_{u}\ \forall u\in I$
then\footnote{Recall the notations introduced in Chapter 4, where $\mathtt{g}$
denote the metric in the cotangent bundle.} $\left.  \mathtt{\eta(}%
\mathcal{F}_{\epsilon_{\mathbf{0}}}X,~)\right\vert _{u}$ and $\left.
\mathtt{\ }\mathbf{\eta}\mathtt{(}\mathcal{F}_{\epsilon_{\mathbf{0}}%
}Y,~)\right\vert _{u}\in H_{u}$,$\ \forall u$ and
\begin{equation}
\mathcal{F}_{\epsilon_{\mathbf{0}}}X\cdot Y=\boldsymbol{D}_{\epsilon
_{\mathbf{0}}}X\cdot Y. \label{fermi2}%
\end{equation}

\end{proposition}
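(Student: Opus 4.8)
The plan is to reduce the whole proposition to the explicit formula in item (a) and then obtain (b), (c), (d) by inspection of that formula. Introduce the pointwise orthogonal projectors $\mathbf{p}Y=(\varepsilon_{\mathbf{0}}\cdot Y)\varepsilon_{\mathbf{0}}$ and $\mathbf{q}=\mathrm{id}-\mathbf{p}$ onto the line spanned by the velocity $1$-form $\varepsilon_{\mathbf{0}}$ (recall $\varepsilon_{\mathbf{0}}\cdot\varepsilon_{\mathbf{0}}=1$) and onto the rest-space $H$, so that $\mathbf{p}^{2}=\mathbf{p}$, $\mathbf{q}^{2}=\mathbf{q}$, $\mathbf{p}\mathbf{q}=\mathbf{q}\mathbf{p}=0$, $\mathbf{p}\varepsilon_{\mathbf{0}}=\varepsilon_{\mathbf{0}}$, and by Eq. (\ref{fermi1}) $\mathcal{F}_{\epsilon_{\mathbf{0}}}Y=\mathbf{p}\,\boldsymbol{D}_{\epsilon_{\mathbf{0}}}(\mathbf{p}Y)+\mathbf{q}\,\boldsymbol{D}_{\epsilon_{\mathbf{0}}}(\mathbf{q}Y)$. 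Two auxiliary facts carry the argument: since $\varepsilon_{\mathbf{0}}\cdot\varepsilon_{\mathbf{0}}=1$ is constant along $\sigma$, metric compatibility of the induced Levi-Civita connection gives $a\cdot\varepsilon_{\mathbf{0}}=0$, i.e. the acceleration $a=\boldsymbol{D}_{\epsilon_{\mathbf{0}}}\varepsilon_{\mathbf{0}}$ is spatial ($\mathbf{p}a=0$, $\mathbf{q}a=a$); and on scalars $\boldsymbol{D}_{\epsilon_{\mathbf{0}}}$ is $d/d\tau$, satisfying the Leibniz rule and $\boldsymbol{D}_{\epsilon_{\mathbf{0}}}(X\cdot Y)=\boldsymbol{D}_{\epsilon_{\mathbf{0}}}X\cdot Y+X\cdot\boldsymbol{D}_{\epsilon_{\mathbf{0}}}Y$.

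For (a) I would first compute, by the Leibniz rule and metric compatibility, $\boldsymbol{D}_{\epsilon_{\mathbf{0}}}(\mathbf{p}Y)=(a\cdot Y)\varepsilon_{\mathbf{0}}+(\varepsilon_{\mathbf{0}}\cdot\boldsymbol{D}_{\epsilon_{\mathbf{0}}}Y)\varepsilon_{\mathbf{0}}+(\varepsilon_{\mathbf{0}}\cdot Y)a$; applying $\mathbf{p}$ (which fixes $\varepsilon_{\mathbf{0}}$ and annihilates $a$) leaves the first two terms. Next $\boldsymbol{D}_{\epsilon_{\mathbf{0}}}(\mathbf{q}Y)=\boldsymbol{D}_{\epsilon_{\mathbf{0}}}Y-\boldsymbol{D}_{\epsilon_{\mathbf{0}}}(\mathbf{p}Y)$, and applying $\mathbf{q}$ (which annihilates $\varepsilon_{\mathbf{0}}$ and fixes $a$) gives $\boldsymbol{D}_{\epsilon_{\mathbf{0}}}Y-(\varepsilon_{\mathbf{0}}\cdot\boldsymbol{D}_{\epsilon_{\mathbf{0}}}Y)\varepsilon_{\mathbf{0}}-(\varepsilon_{\mathbf{0}}\cdot Y)a$. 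Adding the two contributions, the terms in $\varepsilon_{\mathbf{0}}\cdot\boldsymbol{D}_{\epsilon_{\mathbf{0}}}Y$ cancel and one is left with $\mathcal{F}_{\epsilon_{\mathbf{0}}}Y=\boldsymbol{D}_{\epsilon_{\mathbf{0}}}Y-(\varepsilon_{\mathbf{0}}\cdot Y)a+(a\cdot Y)\varepsilon_{\mathbf{0}}$, which is (a).

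Items (b)–(d) then follow from this formula. For (b), substitute (a) into $\mathcal{F}_{\epsilon_{\mathbf{0}}}X\cdot Y+X\cdot\mathcal{F}_{\epsilon_{\mathbf{0}}}Y$: the correction terms contribute $(a\cdot X)(\varepsilon_{\mathbf{0}}\cdot Y)-(\varepsilon_{\mathbf{0}}\cdot X)(a\cdot Y)+(a\cdot Y)(\varepsilon_{\mathbf{0}}\cdot X)-(\varepsilon_{\mathbf{0}}\cdot Y)(a\cdot X)=0$, leaving $\boldsymbol{D}_{\epsilon_{\mathbf{0}}}X\cdot Y+X\cdot\boldsymbol{D}_{\epsilon_{\mathbf{0}}}Y=\tfrac{d}{d\tau}(X\cdot Y)$ by metric compatibility. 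For (c), set $Y=\varepsilon_{\mathbf{0}}$ in (a) and use $\varepsilon_{\mathbf{0}}\cdot\varepsilon_{\mathbf{0}}=1$, $a\cdot\varepsilon_{\mathbf{0}}=0$ to get $\mathcal{F}_{\epsilon_{\mathbf{0}}}\varepsilon_{\mathbf{0}}=a+0-a=0$. For (d), if $\varepsilon_{\mathbf{0}}\cdot X=0$ then (a) reduces to $\mathcal{F}_{\epsilon_{\mathbf{0}}}X=\boldsymbol{D}_{\epsilon_{\mathbf{0}}}X+(a\cdot X)\varepsilon_{\mathbf{0}}$; contracting with $\varepsilon_{\mathbf{0}}$ and using $\varepsilon_{\mathbf{0}}\cdot\boldsymbol{D}_{\epsilon_{\mathbf{0}}}X=-a\cdot X$ (differentiate $\varepsilon_{\mathbf{0}}\cdot X=0$) shows $\varepsilon_{\mathbf{0}}\cdot\mathcal{F}_{\epsilon_{\mathbf{0}}}X=0$, so $\mathcal{F}_{\epsilon_{\mathbf{0}}}X$ stays in $H$; then, for $Y$ also in $H$, $\mathcal{F}_{\epsilon_{\mathbf{0}}}X\cdot Y=\boldsymbol{D}_{\epsilon_{\mathbf{0}}}X\cdot Y+(a\cdot X)(\varepsilon_{\mathbf{0}}\cdot Y)=\boldsymbol{D}_{\epsilon_{\mathbf{0}}}X\cdot Y$, which is Eq. (\ref{fermi2}).

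There is no conceptual obstacle here; the proposition is an algebraic unpacking of the defining relation (\ref{fermi1}). The only care needed is bookkeeping: keeping the projectors' actions straight on the $1$-forms, fixing the $(1,3)$ conventions so that $\varepsilon_{\mathbf{0}}\cdot\varepsilon_{\mathbf{0}}=+1$, and noting that $\sigma^{\ast}\boldsymbol{D}$ inherits metric compatibility from $\boldsymbol{D}$ and restricts to $d/d\tau$ on scalars along $\sigma$ — after which the cancellations above are automatic. (A proof along these lines is standard; see e.g. \cite{rc2016,sw}.)
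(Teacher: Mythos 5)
Your argument is correct and complete. Note that the paper itself gives no proof of this proposition (it declares the Darboux-biform proposition ``trivial'' and for the Fermi--Walker properties simply refers the reader to \cite{rc2016}), so there is no in-text argument to diverge from; your direct unpacking of the defining relation (\ref{fermi1}) is exactly the standard computation one would expect there. The one interpretive step you take silently is identifying the undefined operators $\mathbf{p}$ and $\mathbf{q}$ in (\ref{fermi1}) with the $\mathtt{\eta}$-orthogonal projections onto the span of $\varepsilon_{\mathbf{0}}$ and onto the rest space $H$ (transferred to $1$-forms via the metric isomorphism, which commutes with $\boldsymbol{D}$); that is clearly the intended meaning, and with it your derivation of (a), the cancellation giving (b), the substitution $Y=\varepsilon_{\mathbf{0}}$ giving (c), and the computation $\varepsilon_{\mathbf{0}}\cdot\mathcal{F}_{\epsilon_{\mathbf{0}}}X=0$ plus Eq.~(\ref{fermi2}) for (d) are all sound, including the sign bookkeeping for the $(1,3)$ signature with $\varepsilon_{\mathbf{0}}\cdot\varepsilon_{\mathbf{0}}=1$ and $a\cdot\varepsilon_{\mathbf{0}}=0$.
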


For a proof, see, e.g., \cite{rc2016}\smallskip

Now, let $Y_{0}\in\sec T_{\sigma\tau_{0}}^{\ast}M$. Then, by a well known
property of connections there exists one and only one $1$-form field $Y$ over
$\sigma$ such that $\mathcal{F}_{\epsilon_{o}}Y=0$ and $Y(\tau_{0})=Y_{0}$.
So, if $\{\left.  \varepsilon_{\mathbf{a}}\right\vert _{\tau_{0}}\}$ is an
orthonormal basis for $T_{\sigma\tau_{0}}^{\ast}M$ ($\left.  \varepsilon
_{\mathbf{a}}\right\vert _{\tau_{0}}\in T_{\sigma\tau_{0}}^{\ast}M$,
$\mathbf{a}=0,1,2,3$) we have that the $\varepsilon_{\mathbf{a}}$'$s$ such
that $\mathcal{F}_{\epsilon_{\mathbf{o}}}\varepsilon_{\mathbf{a}}=0$ are
orthonormal for any $\tau$, as follows from (b) in proposition
\ref{fermi proposition}.

We then, agree to say that $Y_{1}\in H_{\tau_{1}}^{\ast}$ and $Y_{2}\in
H_{\tau_{2}}^{\ast}$ have the same spatial direction if and only if
$Y_{1}=a^{i}\left.  \varepsilon_{i}\right\vert _{\tau_{1}}$, $Y_{2}%
=a^{i}\left.  \varepsilon_{i}\right\vert _{\tau_{2}}$. This suggests the following

\begin{definition}
\ We say that $Y$ $\in\sec T_{\sigma}^{\ast}M$ is transported without rotation
\emph{(}\textit{Fermi transported}\emph{)} if and only if $\mathcal{F}%
_{\epsilon_{\mathbf{0}}}Y=0$.
\end{definition}

In that case we have
\begin{align}
\boldsymbol{D}_{\epsilon_{\mathbf{0}}}Y=\varepsilon_{\mathbf{0}}%
\cdot{\mbox{\boldmath$\partial$}}Y\equiv\frac{D}{d\tau}Y  &  =(Y\cdot
\varepsilon_{\mathbf{0}})a-(a\cdot Y)\varepsilon_{\mathbf{0}}\nonumber\\
&  =Y\lrcorner(\varepsilon_{\mathbf{0}}\wedge a)=(a\wedge\varepsilon
_{\mathbf{0}})\llcorner Y. \label{fermi7}%
\end{align}

\subsection{Frenet Frames over%
\index{frame!Frenet}
$\sigma$ and the Darboux Biform}

\begin{proposition}
If $\{\varepsilon_{\mathbf{a}}\}$ is a comoving coframe over $\sigma$, then
there exists a unique biform field $\Omega_{\mathrm{D}}$ over $\sigma$, called
the angular velocity\ \emph{(}Darboux biform\emph{)} such that the
$\varepsilon_{\mathbf{a}}$ satisfy the following system of differential
equations
\begin{gather}
\boldsymbol{D}_{\epsilon_{_{\mathbf{0}}}}\varepsilon_{\mathbf{a}}%
=\Omega_{\mathrm{D}}\llcorner\varepsilon_{\mathbf{a}},\nonumber\\
\Omega_{\mathrm{D}}=\frac{1}{2}\omega_{\mathbf{ab}}\varepsilon^{\mathbf{a}%
}\wedge\varepsilon^{\mathbf{b}}=-\frac{1}{2}(\boldsymbol{D}_{\epsilon
_{_{\mathbf{0}}}}\varepsilon_{\mathbf{b}})\wedge\varepsilon^{\mathbf{b}}.
\label{darboux}%
\end{gather}

\end{proposition}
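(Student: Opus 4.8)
The plan is to establish existence and uniqueness of $\Omega_{\mathrm{D}}$ by a direct computation in the comoving coframe, the whole argument resting on the orthonormality of $\{\varepsilon_{\mathbf{a}}\}$ along $\sigma$ and on the metric compatibility of the induced connection $\sigma^{\ast}\boldsymbol{D}$ (abbreviated $\boldsymbol{D}$). First I would record that $\varepsilon_{\mathbf{a}}\cdot\varepsilon_{\mathbf{b}}=\eta_{\mathbf{ab}}$ is constant along $\sigma$, so differentiating along $\epsilon_{\mathbf{0}}$ and using metric compatibility gives $(\boldsymbol{D}_{\epsilon_{\mathbf{0}}}\varepsilon_{\mathbf{a}})\cdot\varepsilon_{\mathbf{b}}+\varepsilon_{\mathbf{a}}\cdot(\boldsymbol{D}_{\epsilon_{\mathbf{0}}}\varepsilon_{\mathbf{b}})=0$. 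Hence the smooth functions $\omega_{\mathbf{ab}}:=(\boldsymbol{D}_{\epsilon_{\mathbf{0}}}\varepsilon_{\mathbf{b}})\cdot\varepsilon_{\mathbf{a}}$ (sign/index placement to be fixed so as to match the statement) are antisymmetric in $\mathbf{a},\mathbf{b}$, which is exactly what makes $\Omega_{\mathrm{D}}=\tfrac12\omega_{\mathbf{ab}}\varepsilon^{\mathbf{a}}\wedge\varepsilon^{\mathbf{b}}$ a genuine biform with the right number of components.

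For existence I would set $\Omega_{\mathrm{D}}:=\tfrac12\omega_{\mathbf{ab}}\varepsilon^{\mathbf{a}}\wedge\varepsilon^{\mathbf{b}}$ and compute $\Omega_{\mathrm{D}}\llcorner\varepsilon_{\mathbf{c}}$ using the standard contraction identity $(\alpha\wedge\beta)\llcorner v=\alpha(\beta\cdot v)-\beta(\alpha\cdot v)$ together with $\varepsilon^{\mathbf{a}}\cdot\varepsilon_{\mathbf{c}}=\delta^{\mathbf{a}}_{\mathbf{c}}$; the two resulting terms coincide up to the antisymmetry of $\omega_{\mathbf{ab}}$ and combine to $\omega_{\mathbf{ac}}\varepsilon^{\mathbf{a}}$. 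On the other side, expanding $\boldsymbol{D}_{\epsilon_{\mathbf{0}}}\varepsilon_{\mathbf{c}}$ in the reciprocal coframe $\{\varepsilon^{\mathbf{b}}\}$ and using orthonormality to raise/lower the frame index yields $\boldsymbol{D}_{\epsilon_{\mathbf{0}}}\varepsilon_{\mathbf{c}}=[(\boldsymbol{D}_{\epsilon_{\mathbf{0}}}\varepsilon_{\mathbf{c}})\cdot\varepsilon_{\mathbf{a}}]\varepsilon^{\mathbf{a}}=\omega_{\mathbf{ca}}\varepsilon^{\mathbf{a}}$; matching the two gives $\boldsymbol{D}_{\epsilon_{\mathbf{0}}}\varepsilon_{\mathbf{a}}=\Omega_{\mathrm{D}}\llcorner\varepsilon_{\mathbf{a}}$. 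Substituting $\boldsymbol{D}_{\epsilon_{\mathbf{0}}}\varepsilon_{\mathbf{b}}=\omega_{\mathbf{ab}}\varepsilon^{\mathbf{a}}$ back into $-\tfrac12(\boldsymbol{D}_{\epsilon_{\mathbf{0}}}\varepsilon_{\mathbf{b}})\wedge\varepsilon^{\mathbf{b}}$ then reproduces the second displayed form of $\Omega_{\mathrm{D}}$ in (\ref{darboux}).

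For uniqueness I would note that if $\Omega'$ is another biform over $\sigma$ with $\Omega'\llcorner\varepsilon_{\mathbf{a}}=\boldsymbol{D}_{\epsilon_{\mathbf{0}}}\varepsilon_{\mathbf{a}}$ for all $\mathbf{a}$, then $B:=\Omega_{\mathrm{D}}-\Omega'$ satisfies $B\llcorner\varepsilon_{\mathbf{a}}=0$ for every $\mathbf{a}$; writing $B=\tfrac12 B_{\mathbf{ab}}\varepsilon^{\mathbf{a}}\wedge\varepsilon^{\mathbf{b}}$ and applying the same contraction identity gives $B_{\mathbf{ca}}\varepsilon^{\mathbf{a}}=0$ for each $\mathbf{c}$, and since $\{\varepsilon^{\mathbf{a}}\}$ is a basis of $T^{\ast}_{\sigma}M$ at each point this forces all $B_{\mathbf{ab}}=0$, i.e. $B=0$. (Equivalently: a biform annihilated by interior multiplication with every element of a coframe is zero.)

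I do not expect a genuine obstacle here; the proof is essentially a bookkeeping exercise. The one thing that must be done carefully is the index/sign bookkeeping: keeping straight the left contraction $\llcorner$ versus $\lrcorner$ and the sign relating them, the consistent placement of frame indices in $\omega_{\mathbf{ab}}$ versus $\omega_{\mathbf{ba}}$, and the passage between $\varepsilon_{\mathbf{a}}$ and its reciprocal $\varepsilon^{\mathbf{a}}$ via $\eta_{\mathbf{ab}}$. I would fix the sign convention in the definition of $\omega_{\mathbf{ab}}$ precisely by demanding agreement with both displayed expressions for $\Omega_{\mathrm{D}}$ in (\ref{darboux}), and I would also keep in mind that it is the \emph{induced} connection $\sigma^{\ast}\boldsymbol{D}$ along $\sigma$ that is being used, which is still metric so that the antisymmetry step is legitimate.
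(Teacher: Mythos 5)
Your proposal is correct, and it is worth noting that the paper offers no argument to compare it with: its ``proof'' of this proposition is the single sentence ``The proof is trivial,'' so your write-up supplies precisely the omitted content. Your three steps are the right ones: constancy of $\varepsilon_{\mathbf{a}}\cdot\varepsilon_{\mathbf{b}}$ along $\sigma$ together with metric compatibility of the induced connection gives antisymmetry of $\omega_{\mathbf{ab}}$; the contraction identity $(\alpha\wedge\beta)\llcorner v=\alpha(\beta\cdot v)-\beta(\alpha\cdot v)$ with $\varepsilon^{\mathbf{a}}\cdot\varepsilon_{\mathbf{c}}=\delta^{\mathbf{a}}_{\mathbf{c}}$ verifies $\boldsymbol{D}_{\epsilon_{\mathbf{0}}}\varepsilon_{\mathbf{a}}=\Omega_{\mathrm{D}}\llcorner\varepsilon_{\mathbf{a}}$; and uniqueness holds because a biform annihilated by contraction with every member of a coframe has all components zero. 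One caution about the sign bookkeeping you (rightly) defer: with the contraction convention the paper itself fixes in Eq.\ (\ref{fermi7}) and in the corollary $\Omega_{F}=a\wedge\varepsilon_{\mathbf{0}}$ of Eq.\ (\ref{frenet2}), the defining relation $\boldsymbol{D}_{\epsilon_{\mathbf{0}}}\varepsilon_{\mathbf{a}}=\Omega_{\mathrm{D}}\llcorner\varepsilon_{\mathbf{a}}$ forces $\Omega_{\mathrm{D}}=+\tfrac12(\boldsymbol{D}_{\epsilon_{\mathbf{0}}}\varepsilon_{\mathbf{b}})\wedge\varepsilon^{\mathbf{b}}$, as one checks on a Fermi-transported coframe, where $-\tfrac12(\boldsymbol{D}_{\epsilon_{\mathbf{0}}}\varepsilon_{\mathbf{b}})\wedge\varepsilon^{\mathbf{b}}$ evaluates to $-(a\wedge\varepsilon_{\mathbf{0}})$. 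So your plan to fix signs ``by demanding agreement with both displayed expressions'' in (\ref{darboux}) cannot be carried out literally: the minus sign in the last member of (\ref{darboux}) (equivalently, the index order implicit in $\omega_{\mathbf{ab}}$) is a slip of the paper, not of your argument; anchor your conventions to the defining property $\boldsymbol{D}_{\epsilon_{\mathbf{0}}}\varepsilon_{\mathbf{a}}=\Omega_{\mathrm{D}}\llcorner\varepsilon_{\mathbf{a}}$ and the rest is exactly the bookkeeping you describe (your internal $\omega_{\mathbf{ca}}$ versus $\omega_{\mathbf{ac}}$ slip relative to your own definition is of the same harmless kind).
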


\begin{proof}
The proof is trivial and we also have the following
\end{proof}

\begin{corollary}
If the comoving coframe $\{\varepsilon_{\mathbf{a}}\}$ is Fermi transported
then the angular velocity is $\Omega_{F},$%
\begin{equation}
\Omega_{F}=a\wedge\varepsilon_{\mathbf{0}}. \label{frenet2}%
\end{equation}

\end{corollary}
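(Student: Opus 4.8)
The plan is to obtain the identity (\ref{frenet2}) as an immediate consequence of the characterization of the Darboux biform in the preceding Proposition together with the Fermi transport identity (\ref{fermi7}). First I would spell out the hypothesis: to say that the comoving coframe $\{\varepsilon_{\mathbf{a}}\}$ is Fermi transported means, by the Definition above, that $\mathcal{F}_{\epsilon_{\mathbf{0}}}\varepsilon_{\mathbf{a}}=0$ for every $\mathbf{a}=0,1,2,3$; here the case $\mathbf{a}=\mathbf{0}$ is automatic by item (c) of Proposition \ref{fermi proposition}, and it amounts to $\boldsymbol{D}_{\epsilon_{\mathbf{0}}}\varepsilon_{\mathbf{0}}=a$, which is just the definition of the $1$-form acceleration $a$ used throughout; note also that $a\cdot\varepsilon_{\mathbf{0}}=0$ because $\varepsilon_{\mathbf{0}}\cdot\varepsilon_{\mathbf{0}}=1$ is constant along $\sigma$.

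The key step is then to apply the identity (\ref{fermi7}), valid for any Fermi transported covector field over $\sigma$, to each member of the coframe, obtaining $\boldsymbol{D}_{\epsilon_{\mathbf{0}}}\varepsilon_{\mathbf{a}}=(a\wedge\varepsilon_{\mathbf{0}})\llcorner\varepsilon_{\mathbf{a}}$ for $\mathbf{a}=0,1,2,3$ (for $\mathbf{a}=\mathbf{0}$ this reduces to the trivial identity $a=a$, using $a\cdot\varepsilon_{\mathbf{0}}=0$ and $\varepsilon_{\mathbf{0}}\cdot\varepsilon_{\mathbf{0}}=1$). Comparing these four equations with the defining system of the Darboux biform in (\ref{darboux}), namely $\boldsymbol{D}_{\epsilon_{\mathbf{0}}}\varepsilon_{\mathbf{a}}=\Omega_{\mathrm{D}}\llcorner\varepsilon_{\mathbf{a}}$, we see that the $2$-form $a\wedge\varepsilon_{\mathbf{0}}$ satisfies exactly the system that the preceding Proposition asserts to determine $\Omega_{\mathrm{D}}$ uniquely. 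Since $\{\varepsilon_{\mathbf{a}}\}$ is a basis of $T^{\ast}_{\sigma(\tau)}M$ at each $\tau$, the contractions $\Omega\llcorner\varepsilon_{\mathbf{a}}$, $\mathbf{a}=0,1,2,3$, determine the $2$-form $\Omega$ (a $2$-form annihilated by contraction with all basis covectors vanishes), which is precisely the uniqueness clause of that Proposition; hence $\Omega_{\mathrm{D}}=a\wedge\varepsilon_{\mathbf{0}}$, i.e., $\Omega_{F}=a\wedge\varepsilon_{\mathbf{0}}$.

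As an alternative route one could substitute directly into the closed form $\Omega_{\mathrm{D}}=-\tfrac{1}{2}(\boldsymbol{D}_{\epsilon_{\mathbf{0}}}\varepsilon_{\mathbf{b}})\wedge\varepsilon^{\mathbf{b}}$ from (\ref{darboux}) the value $\boldsymbol{D}_{\epsilon_{\mathbf{0}}}\varepsilon_{\mathbf{b}}=(\varepsilon_{\mathbf{0}}\cdot\varepsilon_{\mathbf{b}})a-(a\cdot\varepsilon_{\mathbf{b}})\varepsilon_{\mathbf{0}}$ read off from item (a) of Proposition \ref{fermi proposition} with $\mathcal{F}_{\epsilon_{\mathbf{0}}}\varepsilon_{\mathbf{b}}=0$, and then sum over $\mathbf{b}$ using $\sum_{\mathbf{b}}(\varepsilon_{\mathbf{0}}\cdot\varepsilon_{\mathbf{b}})\varepsilon^{\mathbf{b}}=\varepsilon_{\mathbf{0}}$ and $\sum_{\mathbf{b}}(a\cdot\varepsilon_{\mathbf{b}})\varepsilon^{\mathbf{b}}=a$. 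I do not expect any genuine obstacle: the statement is essentially a restatement of (\ref{fermi7}) in the language of the Darboux biform, and the only thing requiring care is keeping the interior-product conventions consistent — in particular the orientation $a\wedge\varepsilon_{\mathbf{0}}$ versus $\varepsilon_{\mathbf{0}}\wedge a$ and the sign rule relating $\lrcorner$ and $\llcorner$ together with $\widetilde{a\wedge\varepsilon_{\mathbf{0}}}=-a\wedge\varepsilon_{\mathbf{0}}$ — so that the timelike leg $\varepsilon_{\mathbf{0}}$ is handled on exactly the same footing as the spacelike ones.
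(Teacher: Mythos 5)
Your main argument is correct and is essentially the paper's own (unwritten) proof: the paper treats the corollary as an immediate consequence of the Fermi--Walker formula, exactly as you do. Fermi transport of the coframe plus Eq.~(\ref{fermi7}) gives $\boldsymbol{D}_{\epsilon_{\mathbf{0}}}\varepsilon_{\mathbf{a}}=(a\wedge\varepsilon_{\mathbf{0}})\llcorner\varepsilon_{\mathbf{a}}$ for all $\mathbf{a}$, and since a $2$-form over $\sigma$ is determined by its contractions with the four basis covectors, the uniqueness clause of the Darboux proposition forces $\Omega_{\mathrm{D}}=a\wedge\varepsilon_{\mathbf{0}}$, which is (\ref{frenet2}). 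Your side remarks (the case $\mathbf{a}=\mathbf{0}$ reducing to $a=a$, and $a\cdot\varepsilon_{\mathbf{0}}=0$) are also fine.

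One caveat about your ``alternative route'': if you substitute $\boldsymbol{D}_{\epsilon_{\mathbf{0}}}\varepsilon_{\mathbf{b}}=(\varepsilon_{\mathbf{0}}\cdot\varepsilon_{\mathbf{b}})a-(a\cdot\varepsilon_{\mathbf{b}})\varepsilon_{\mathbf{0}}$ into the closed form exactly as printed in Eq.~(\ref{darboux}), namely $\Omega_{\mathrm{D}}=-\tfrac{1}{2}(\boldsymbol{D}_{\epsilon_{\mathbf{0}}}\varepsilon_{\mathbf{b}})\wedge\varepsilon^{\mathbf{b}}$, and use $\sum_{\mathbf{b}}(\varepsilon_{\mathbf{0}}\cdot\varepsilon_{\mathbf{b}})\varepsilon^{\mathbf{b}}=\varepsilon_{\mathbf{0}}$ and $\sum_{\mathbf{b}}(a\cdot\varepsilon_{\mathbf{b}})\varepsilon^{\mathbf{b}}=a$, you obtain $-\tfrac{1}{2}(a\wedge\varepsilon_{\mathbf{0}}-\varepsilon_{\mathbf{0}}\wedge a)=-a\wedge\varepsilon_{\mathbf{0}}$, the \emph{opposite} sign to (\ref{frenet2}). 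The discrepancy is not an error of yours but an internal sign inconsistency in (\ref{darboux}): with the contraction conventions under which (\ref{fermi7}) itself holds (so that $B\llcorner Y=-Y\lrcorner B$ for a biform $B$ and a $1$-form $Y$), the system $\boldsymbol{D}_{\epsilon_{\mathbf{0}}}\varepsilon_{\mathbf{a}}=\Omega_{\mathrm{D}}\llcorner\varepsilon_{\mathbf{a}}$ yields $\boldsymbol{D}_{\epsilon_{\mathbf{0}}}\varepsilon_{\mathbf{b}}=-\omega_{\mathbf{b}\mathbf{d}}\varepsilon^{\mathbf{d}}$ and hence $\Omega_{\mathrm{D}}=+\tfrac{1}{2}(\boldsymbol{D}_{\epsilon_{\mathbf{0}}}\varepsilon_{\mathbf{b}})\wedge\varepsilon^{\mathbf{b}}$; the contraction system, the corollary as stated, and Eq.~(\ref{frenet4}) are mutually consistent, while the minus sign in the second line of (\ref{darboux}) is not. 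So either keep the uniqueness argument as your proof (recommended), or carry out the substitution with the corrected sign; had you actually executed the alternative computation rather than only sketching it, the sign clash would have surfaced and needed this resolution.
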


\subsection{Physical Meaning of Fermi Transport%
\index{Fermi transport!physical meaning}%
}

Suppose that a comoving coframe $\{\varepsilon_{\mathbf{a}}\}$ is Fermi
transported along a timelike curve\ $\sigma$, \ `materialized' by some
particle. The physical meaning associated to Fermi transport is that the
spatial axis of the tetrad $\mathtt{\eta}(\varepsilon_{i},~)=\epsilon
_{i},i=1,2,3$ are to be associated with the orthogonal spatial directions of
three small gyroscopes carried along $\sigma.$

\begin{definition}
A Frenet coframe $\{f_{\mathbf{a}}\}$ over $\sigma$ is a moving coframe over
$\sigma$ such that $f_{\mathbf{0}}=\boldsymbol{g}(\sigma_{\ast}%
,)=\boldsymbol{g}(\epsilon_{\mathbf{0}},)=\varepsilon_{\mathbf{0}}$ and
\begin{align}
\boldsymbol{D}_{\epsilon_{\mathbf{0}}}f_{\mathbf{a}}  &  =\Omega_{\mathrm{D}%
}\llcorner f_{\mathbf{a}},\nonumber\\
\Omega_{\mathrm{D}}  &  =\kappa_{0}f^{\mathbf{1}}\wedge f^{\mathbf{0}}%
+\kappa_{1}f^{\mathbf{2}}\wedge f^{\mathbf{1}}+\kappa_{2}f^{\mathbf{3}}\wedge
f^{\mathbf{2}}, \label{frenet3}%
\end{align}
where $\kappa_{i},i=0,1,2$\textbf{ }is the $i$-curvature, which is the
projection of $\Omega_{\mathrm{D}}$ in the $f^{\mathbf{i+1}}\wedge f^{I}$ plane.
\end{definition}

\begin{definition}
We say that a $1$-form field $Y$ over $\sigma$ is rotating if and only if it
is rotating in relation to gyroscopes axis, i.e., if\ $\mathcal{F}%
_{\epsilon_{\mathbf{0}}}Y\neq0$.
\end{definition}

\subsection{Rotation 2-form, the\ Pauli-Lubanski Spin $1$-Form and Classical
Spinning Particles}

From Eq. (\ref{frenet3}) taking into account that $a=\boldsymbol{D}%
_{\epsilon_{\mathbf{0}}}f_{\mathbf{0}}=\kappa_{\mathbf{0}}f^{\mathbf{1}}$ and
Eq. (\ref{frenet2}) we can write%
\begin{equation}
\Omega_{\mathrm{D}}=a\wedge f_{\mathbf{0}}+\Omega_{\mathbf{S}}.
\label{frenet4}%
\end{equation}

We now show that the 2-form $\Omega_{\mathbf{S}}$ over $\sigma$ is directly
related (a dimensional factor apart) with the spin 2-form of a
\textit{classical spinning particle}. More, we show now that the Hodge dual of
$\Omega_{\mathbf{S}}$ is associated with the Pauli-Lubanski $1$-form. To have
a notation as closely as possible the usual ones of physical textbooks, let us
put $f_{\mathbf{0}}=v$.

Call $\star\Omega_{\mathbf{S}}$ the Hodge dual of $\Omega_{\mathbf{S}}$. It is
the $2$-form over $\sigma$ given by
\begin{equation}
\star\Omega_{\mathbf{S}}=-\Omega_{\mathbf{S}}f^{\mathbf{5}},\text{
}f^{\mathbf{5}}=f^{\mathbf{0}}f^{\mathbf{1}}f^{\mathbf{2}}f^{\mathbf{3}}.
\label{frenet5}%
\end{equation}

Define the \textit{rotation} $1$-form $\mathbf{S}$ over $\sigma$ by%
\begin{equation}
\mathbf{S}=-\star\Omega_{\mathbf{S}}\llcorner v. \label{frenet6}%
\end{equation}

Since $\Omega_{\mathbf{S}}\llcorner v=0$, we have immediately that
\begin{equation}
\mathbf{S}\cdot v=-\star\Omega_{\mathbf{S}}\llcorner(v\wedge v)=0.
\label{frenet7}%
\end{equation}

Now, since $\boldsymbol{D}_{\epsilon_{\mathbf{0}}}(\mathbf{S}\cdot v)=0$ we
have that ($\boldsymbol{D}_{\epsilon_{\mathbf{0}}}\mathbf{S}\mathfrak{)}\cdot
a=-\mathbf{S}\cdot a$ and then
\begin{equation}
\boldsymbol{D}_{\epsilon_{\mathbf{0}}}\mathbf{S}=-\mathbf{S}\lrcorner(a\wedge
v), \label{frenet8}%
\end{equation}
and it follows that
\begin{equation}
\mathcal{F}_{\epsilon_{\mathbf{0}}}\mathbf{S}=0. \label{frenet9}%
\end{equation}

It is intuitively clear that we must associate $\mathbf{S}$ with the spin of a
classical spinning particle which follows the worldline $\sigma$. And, indeed,
we define the \textit{Pauli-Lubanski} spin $1$-form by
\begin{equation}
W=k\hbar\mathbf{S},\text{ }\hbar=1, \label{frenet10}%
\end{equation}
where $k>0$ is a real constant and $\hbar$ is Planck constant, which is equal
to $1$ in the natural system of units used in this paper. We recall that as it
is well-known $\boldsymbol{D}_{\epsilon_{\mathbf{0}}}W=-W\lrcorner(a\wedge v)$
is the equation of motion of the intrinsic spin of a classical spinning
particle which is being accelerated by a force producing no torque .

\bigskip


\begin{thebibliography}{99}                                                                                               %


\bibitem {boudet}{\footnotesize Boudet, R., \emph{Quantum Mechanics in the
Geometry of Space-Time. Elementary Theory}, Springer Briefs in Physics,
Springer, Heidelberg, 2011.}

\bibitem {bh}{\footnotesize Bohm, D. and Hiley, B. J., \emph{The Undivided
Universe - An Ontological Interpretation of Quantum Theory}, Routledge, London
(1993).}

\bibitem {ck1}{\footnotesize Chen, P. and Kleinert, H., \emph{Bohm
Trajectories as Approximations to Properly Fluctuating Quantum Trajectories}
[arXiv:1308.5021v1 [quant-ph]]}

\bibitem {ck2}{\footnotesize Chen, P. and Kleinert, H., Deficiencies of
Bohmian Trajectories in View of Basic Quantum Principles, \emph{Elect. J.
Theor .Phys.}\textbf{13},1-12 (2016).
http://www.ejtp.com/articles/ejtpv13i35.pdf}

\bibitem {cru}{\footnotesize {Crumeyrolle, A., \textit{Orthogonal and
Symplectic Clifford Algebras}, Kluwer Acad. Publ., Dordrecht, 1990.} }

\bibitem {daviau}{\footnotesize Daviau, C., Solutions of the Dirac Equation
and of a Nonlinear Dirac Equation for the Hydrogen Atom, \emph{Adv. Appl.
Clifford Algebras}, \textbf{7(S)} 175-194 (1997).}

\bibitem {debroglie}{\footnotesize de Broglie, L., \emph{Non-Linear Wave
Mechanics. A Causal Interpretation}, Elsevier Publ. Co., Amsterdam, 1960.}

\bibitem {ghc2016}{\footnotesize de Gosson, M., Hiley, B., Cohen, E.,
\emph{Observing Quantum Trajectories: From Mott's Problem to Quantum Zeno
Effect and Back}. [arXiv:1606.060651v1 \{quant-ph]]}

\bibitem {dola}{\footnotesize {Doran, C. and Lasenby, A., \emph{Geometric
Algebra for Physicists}, Cambridge University Press, Cambridge, 2003.} }

\bibitem {englertetal}{\footnotesize Englert, B-G, Scully, M. O.,
S\"{u}ssmann, G. and Walther, H.,Surrealistic Bohm Trajectories, \emph{Z.
Naturforsch}. \textbf{47A}, 1175- (1992)}

\bibitem {geroch1}{\footnotesize {Geroch, R. Spinor Structure of Space-Times
in General Relativity I, \textit{J. Math. Phys.} \textbf{9}, 1739-1744
(1968).} }

\bibitem {geroch2}{\footnotesize {Geroch, R. Spinor Structure of Space-Times
in General Relativity. II, \textit{J. Math. Phys.} \textbf{11}, 343-348
(1970).} }

\bibitem {gieres}{\footnotesize Gi\'{e}res, F., \emph{Supersymmetries et
Math\'{e}matiques}. Preprint LYCEN 9419, hep-th/940501.}

\bibitem {gr}{\footnotesize Giglio, J.F.T. and Rodrigues, W. A. Jr., Locally
Inertial Reference Frames in Lorentzian and Riemann- Cartan Spacetimes,
\emph{Annalen der Physik.} \textbf{502}, 302-310 (2012).}

\bibitem {hestenes2015}{\footnotesize Hestenes, D., \emph{Space-Time Algebra}
(second revised edition), Birkh\"{a}user, Basel, 2015.}

\bibitem {hestenes1}{\footnotesize Hestenes, D., Local Observables in Dirac
Theory, \emph{J. Math. Phys.} \textbf{14}, 893-905 (1973).}

\bibitem {hestenes2}{\footnotesize Hestenes, D.,, Observables, Operators and
Complex Numbers in Classical and Quantum Physics, J. Math. Phys. 16, 556-572
(1975)}

\bibitem {hestenes3}{\footnotesize Hestenes, D., Real Dirac Theory, \ in
\ Keller, J. and Oziewicz, Z. (eds.) Proceedings of the International
Conference on Theory of the Electron (Mexico City, September 24-27,1995),
\emph{Adv. Applied Clifford Algebras} \textbf{7}(S), 97-144 (1997).}

\bibitem {hc1}{\footnotesize Hiley, B. J. and Callaghan, R. E., Delayed Choice
Experiments and the Bohm Approach. \emph{Phys. Scr.} \textbf{74}, (2006)
336-348. arXiv 1602.06100. }

\bibitem {hc2}{\footnotesize Hiley, B. J. and Callaghan, R. E., Clifford
Algebras and the Dirac-Bohm Quantum Hamilton-Jacobi Equation,
\emph{Foundations of Physics} \textbf{42} (2012) 192-208}.

\bibitem {hc3}{\footnotesize Hiley, B. J. and Callaghan, R. E., \emph{The
Clifford Algebra Approach to Quantum Mechanics B: The Dirac Particle and its
relation to the Bohm Approach}, (2010) arXiv: 1011.4033. }

\bibitem {holland}{\footnotesize Holland, P. R., \emph{The Quantum Theory of
Motion}, Cambridge, University Press, London 1995.}

\bibitem {jonesetal}{\footnotesize Jones, E., Bach,\ R., and Batelaan, H. Path
integrals, Matter Waves, and the Double Slit, \emph{Eur. J. Phys}.
\textbf{36}, 065048 (2015)}
h{\footnotesize ttp://digitalcommons.unl.edu/cgi/viewcontent.cgi?article=1001\&context=physicsbatelaan}%


\bibitem {ll}{\footnotesize {Landau, L. D. and Lifshitz, E. M., \textit{The
Classical Theory of Fields}, fourth revised English edition, Pergamon Press,
New York, 1975.} }

\bibitem {lawmi}{\footnotesize {Lawson, H. Blaine, Jr. and Michelson, M. L.,
\textit{Spin Geometry}, Princeton University Press, Princeton, 1989.} }

\bibitem {lounesto}{\footnotesize Lounesto, P., \emph{Clifford Algebras and
Spinors}, Cambridge Univ. Press, Cambridge, 1997}.

\bibitem {mahleretal}{\footnotesize Mahler, D. H., Rozema, L., Fisher,
K.,Vermeyden,L., Resch, \ K. J., Wiseman, H. M. and Steinberg A., Experimental
Nonlocal and Surreal Bohmian Trajectories, \emph{Sci. Adv.} \textbf{2}
,e1501466 (2016).
http://advances.sciencemag.org/content/advances/2/2/e1501466.full.pdf}

\bibitem {orv2013}{\footnotesize {Oliveira, E. Capelas de, and Rodrigues, W.
A. Jr., and Vaz, J. Jr., }\emph{Elko Spinor Fields and Massive Magnetic Like
Monopoles, Int. J. Theor. Physics (}2014) [arXiv:1306.4645 [math-ph]].}

\bibitem {rv}{\footnotesize {Rodrigues, W. A. Jr., and Vaz J. Jr., Subluminal
and Superluminal Solutions in Vacuum of the Maxwell Equations and the Massless
Dirac Equation. Talk presented at the International Conference on the Theory
of the Electron, Mexico City, 1995, \emph{Adv. Appl. Clifford Algebras}%
\textit{ }\textbf{7} (Sup.), 453-462 (1997).} }

\bibitem {rl}{\footnotesize {Rodrigues, W. A. Jr., and Lu, J. Y., On the
Existence of Undistorted Progressive Waves (UPWs) of Arbitrary Speeds $0\leq
v<\infty$ in Nature, \emph{Found. Phys}\textit{.} \textbf{27}, 435--508
(1997).} }

\bibitem {mr2004}{\footnotesize Mosna, R. A. and Rodrigues, W. A. Jr., The
Bundles of Algebraic and Dirac-Hestenes Spinor Fields, \emph{J. Math. Phys}.
\textbf{45}, 2945-2966 (2004).}

\bibitem {r2004}{\footnotesize Rodrigues, W. A. Jr., Algebraic and
Dirac-Hestenes Spinors and Spinor Fields, \emph{J. Math. Phys}. \textbf{45},
2908-2994 (2004)}.

\bibitem {rvp}{\footnotesize Rodrigues, W. A. Jr., Vaz, J. Jr. and Pavsic, M,
The Clifford Bundle and the Dynamics of the Superparticle, \emph{Banach Center
Publications. Polish Acad. Sci.} \textbf{37}, 295-314 (1996). }

\bibitem {rc2016}{\footnotesize Rodrigues,W. A..Jr. .and Capelas de
Oliveira,E., \emph{The Many Faces of Maxwell, Dirac and Einstein Equations. A
Clifford Bundle Approach} (second revised and enlarged
edition),Springer,Heidelberg, 2016}.

\bibitem {sw}{\footnotesize {Sachs, R. K., and Wu, H., \textit{General
Relativity for Mathematicians}, Springer-Verlag, New York, 1977.} }

\bibitem {savantetal}{\footnotesize Sawant, R., Samuel, J., Sinha, A., Sinha,
S.and Sinha, U., \emph{Non-Classical Paths in Interference Experiments},
[arXiv:1308.2022v2 [quant-ph]].}

\bibitem {vr2016}{\footnotesize Vaz, J. Jr. and da Rocha, R., \emph{An
Introduction to Clifford Algebras and Spinors}, Oxford Univ. Press, Oxford,
2016.}

\bibitem {zr}{\footnotesize {Zeni, J. R. R. and Rodrigues, W. A., Jr., A
Thoughtful Study of Lorentz Transformations by Clifford Algebras, \textit{Int.
J. Mod. Phys. A} \textbf{7}, 1793-1817 (1992).}}
\end{thebibliography}
\end{document}